\theoremstyle{definition}
\newtheorem{theorem}{Theorem}
\newtheorem{claim}{Claim}
\newtheorem{definition}{Definition}
\newtheorem{proposition}{Proposition}
\newtheorem{assumption}{Assumption}
\newtheorem{example}{Example}
\newcommand{\aref}[1]{\hyperref[#1]{Appendix~\ref*{#1}}}
\definecolor{codegreen}{rgb}{0,0.6,0}
\definecolor{codegray}{rgb}{0.5,0.5,0.5}
\definecolor{codepurple}{rgb}{0.58,0,0.82}
\definecolor{backcolour}{rgb}{0.98,0.98,0.96}
\lstdefinelanguage{BPPy}
{morekeywords={interrupt, block, request, waitFor, Block, Request, mustFinish, localReward, yield, while, def, for, in, range, from, import, return, if, else, lambda, sum, filter, class, super},
sensitive=false,
morecomment=[l]{\#},
morecomment=[s]{/*}{*/},
morestring=[b]",
morestring=[b]',
}
\lstdefinestyle{mystyle}{
    language=BPPy,
    backgroundcolor=\color{backcolour},   
    commentstyle=\color{codegreen},
    keywordstyle=\color{blue},
    keywordstyle=[5]\ttfamily\scriptsize
    numberstyle=\tiny\color{codegray},
    stringstyle=\color{codepurple},
    basicstyle=\ttfamily
    \lst@ifdisplaystyle\footnotesize\fi,
    breakatwhitespace=false,         
    breaklines=true,                 
    captionpos=b,                    
    keepspaces=true,                 
    numbers=none,                    
    numbersep=5pt,                  
    showspaces=false,                
    showstringspaces=false,
    showtabs=false,                  
    tabsize=2
}
\newcommand{\bt}[1]{\lstinline[language=,breaklines=false,prebreak=]{req\_#1}}
\newcommand{\li}[2][]{{\footnotesize\lstinline[breaklines=false,prebreak=,#1]{#2}}}
\author{\IEEEauthorblockN{Tom Yaacov\orcidlink{0000-0002-0565-6506}}
\IEEEauthorblockA{Computer Science Department\\
Ben-Gurion University of the Negev\\
tomya@post.bgu.ac.il}
\and
\IEEEauthorblockN{Achiya Elyasaf\orcidlink{0000-0002-4009-5353}}
\IEEEauthorblockA{Software and Information\\ Systems Engineering Department\\
Ben-Gurion University of the Negev\\
achiya@bgu.ac.il}
\and
\IEEEauthorblockN{Gera Weiss\orcidlink{0000-0002-5832-8768}}
\IEEEauthorblockA{Computer Science Department\\
Ben-Gurion University of the Negev\\
geraw@cs.bgu.ac.il}}
\begin{document}
%
\title{Making Liveness Requirements Executable in Behavioral Programming}
\title{Adding Liveness Requirements to the Behavioral Programming Executable Specification Language}

\title{Adding and Executing  Liveness Requirements in Behavioral Programs}

\title{Bringing Behavioral Programs to Life: \\ Specifying and Executing Liveness Requirements}

\title{Keeping Behavioral Programs Alive: \\ Specifying and Executing Liveness Requirements}

\maketitle



%


\maketitle

\begin{abstract}
One of the benefits of using executable specifications such as Behavioral Programming (BP) is the ability to align the system implementation with its requirements. This is facilitated in BP by a protocol that allows independent implementation modules that specify what the system may, must, and must not do. By that, each module can enforce a single system requirement, including negative specifications such as ``don't do X after Y.'' The existing BP protocol, however, allows only the enforcement of safety requirements and does not support the execution of liveness properties such as ``do X \emph{at least} three times.'' To model liveness requirements in BP directly and independently, we propose idioms for tagging states with ``must-finish,'' indicating that tasks are yet to be completed. We show that this idiom allows a direct specification of known requirements patterns from the literature. We also offer semantics and two execution mechanisms, one based on a translation to B\"uchi automata and the other based on a Markov decision process (MDP). The latter approach offers the possibility of utilizing deep reinforcement learning (DRL) algorithms, which bear the potential to handle large software systems effectively. This paper presents a qualitative and quantitative assessment of the proposed approach using a proof-of-concept tool. A formal analysis of the MDP-based execution mechanism is given in an appendix.
\end{abstract}


\IEEEpeerreviewmaketitle
\section{introduction}
\label{sec:introduction}

The vision of executable specifications~\cite{delahaye_common_2013, gajski_specc_2012, johnsen_abs_2010} is to streamline software development by providing methods for transforming individual requirements directly into implementation modules. This helps to reduce the discrepancy between what was intended and what the system actually does and allows developers to focus on capturing and implementing requirements clearly and concisely, eliminating the need to write additional code that is not directly related to them~\cite{kastens2005,schubert2006}. In contrast, traditional programming practices often involve indirect translation of requirements into code, which can increase complexity, make it difficult to ensure that the code accurately reflects the intended behavior, and hinder maintenance. 

In practice, executable specification methods face the challenge of balancing expressiveness and the ability to execute. For instance, using an idiom like ``win the game'' in chess-playing software may seem appealing, but we lack an execution mechanism that directly implements such requirements. Conversely, we can limit ourselves to requirements that specify the exact response to each opponent's move. However, this approach necessitates a complex translation of requirements and often results in over-specification. Consequently, designers of executable specification frameworks aim to create specification languages that allow requirements to be modeled directly while still allowing the model to be executed without requiring manual code translation.

This paper focuses on the Behavioral Programming (BP) executable specification paradigm. 
This paradigm has been successfully applied in various applications, such as reactive IoT building~\cite{elyasaf_context-based_2018}, a fully functional nano-satellite~\cite{bar-sinai_scenario_2019}, an autonomous rover~\cite{katz_--fly_2019}, and tool suite for model-driven testing~\cite{bar-sinai_provengo_2023}, showcasing its versatility and potential.
Alignment of requirements to implementation was identified as a key design goal in the original proposal BP~\cite{harel_programming_2010, elyasaf_context-oriented_2021}. Nevertheless, we show in this paper that this objective has not been fully realized due to the limitations of current idioms in expressing liveness properties. Consequently, BP cannot directly express and execute requirements such as ``the robot must eventually reach its target'' or ``the robot must be charged infinitely often.'' To address this limitation, this paper proposes the ``must-finish'' idiom used to flag states that indicate that some goals have not yet been achieved in these states. Furthermore, we demonstrate how this idiom allows for a compact specification of requirements using a dataset of common requirement patterns~\cite{dwyer_patterns_1999}.

As said above, proposing modeling idioms is only one part of the equation in executable specifications. An efficient execution mechanism that adheres to these idioms is also necessary. 
We propose two methods to run programs with liveness specifications: by transforming a BP program into a Generalized Büchi Automaton (GBA) or by transforming it into a Markov Decision Process (MDP). These two models represent the program's behavior through their respective semantics and enable algorithms for execution. The GBA-based method defines liveness in the automata's acceptance criteria. The MDP-based method defines liveness by designing a reward function that embodies the desired system behavior. GBA allows synthesis algorithms to produce executions, while MDP allows for methods based on reinforcement learning.

The two methods proposed in this paper are general as they are not tied to specific algorithms for solving a GBA or MDP. Our evaluation even shows that, in the case of MDPs, an exact solution is not mandatory. We demonstrate the potential of the MDP-based approach in using approximate solutions, such as deep neural networks, to learn from systems with high-dimensional state space. This opens up the possibility of using executable specifications in domains where state space exploration is not feasible.

The paper is structured as follows: In \autoref{sec:motivation}, we introduce BP and provide the rationale for the concepts discussed in this paper. \autoref{sec:adding-liveness} extends BP's idioms and abstract semantics. In \autoref{sec:develop-by-verification}, we discuss the challenges and issues with current BP practices. \autoref{sec:ltl-patterns} demonstrates how the ``must-finish'' idiom aids in aligning models with specifications using known requirement patterns. In \autoref{sec:enforcing-liveness-game}, we present the GBA-based approach for enforcing liveness properties. Subsequently, \autoref{sec:enforcing-liveness} describes the MDP-based execution mechanism, and a formal proof of its correctness is provided in \aref{app:a}. \autoref{sec:drl} showcases the potential of the MDP-based approach in learning systems with large state space using neural networks. In \autoref{sec:examples}, we discuss the effectiveness of our methods in handling multiple liveness requirements. We discuss related work in \autoref{sec:related-work}, and conclude with a short discussion in \autoref{sec:conclusion}.

\section{an example-driven introduction to bp and the need for adding liveness}
\label{sec:motivation}
Behavioral Programming (BP)~\cite{harel_behavioral_2012} is a model-based programming paradigm that allows users to specify reactive systems' behavior directly aligned with how the developers perceive the system requirements. 
In BP, users specify scenarios, called \emph{b-threads}, representing behaviors the system should include or avoid. Each scenario is standalone and is usually self-contained, concerning itself with only one specific aspect of the system behavior, typically a single requirement. 
An application-agnostic execution engine interprets and interweaves these scenarios at runtime to produce cohesive system behavior consistent with the system requirements.
Specifically, the execution mechanism is based on a synchronization protocol proposed by~\cite{harel_programming_2010}. The protocol consists of each b-thread submitting a statement before selecting each event that the b-program produces. When a b-thread reaches a point where it is ready to submit a statement, it synchronizes with its peers and declares which events it requests, which events it waits for (but does not request), and which events it blocks (forbids from happening). 
After submitting the statement, the b-thread is paused. When all b-threads have submitted their statements, we say the b-program has reached a \emph{synchronization point}. Then, the event arbiter picks a single event that was requested and was not blocked and resumes all b-threads that requested or waited for that event. The other b-threads remain paused, and their statements are used in the next synchronization point. The process is repeated throughout the execution of the program.


We begin with an illustrative example adapted from the well-known level-crossing benchmark~\cite{leveson_safety_1987} to make these concepts more concrete. Although this model is over 35 years old, it is still widely used for specification and analysis purposes~\cite{elyasaf_what_2023, liu_-penda_2014, ghazel_customizable_2016, mazzeo_systematic_1997}, with over 40 citations in 2023. We will now use this benchmark to introduce BP, and later, we will add liveness requirements and explain the need for liveness execution semantics in BP. 

The benchmark describes a controller for a gate at a railway crossing---an intersection of multiple railway lines and a road at the same level. Each railway line has a sensor that signals the controller whenever the train approaches, enters, or leaves the crossing zone. Based on the signals, the barriers are raised and lowered, ensuring the safety of the trains, i.e., that a train cannot be in the crossing zone while the barriers are up. More formally, the system requirements are as follows:

\begin{enumerate}[label=\textbf{R\arabic*.}, ref=R\arabic*]
    \item \label{itm:req-1} For each railway, the sensors system dictates the exact event order: train approaching, entering, and leaving. Additionally, there is no overlapping between successive train passages.
    \item \label{itm:req-2} The barriers are lowered when a train approaches and then raised as soon as possible.
    \item \label{itm:req-3} A train may not enter while barriers are up.
    \item \label{itm:req-4} The barriers may not be raised while a train is in the intersection zone. The intersection zone is the area between the approaching and leaving sensors.
\end{enumerate}
There is no train at the intersection zone at system initialization, and the barriers are raised. 

\begin{lstlisting}[
float=htp,
caption={A BP program that specifies the requirements for the level-crossing benchmark~\cite{elyasaf_what_2023}. Each b-thread is aligned with a single requirement.},
label={lst:LC-base}
]
def req_1(railway):
  while True:
    yield sync(waitFor=Approaching(railway))
    yield sync(request=Entering(railway), 
               block=Approaching(railway))
    yield sync(request=Leaving(railway), 
               block=Approaching(railway))

def req_2():
  while True:
    yield sync(waitFor=any_approaching)
    yield sync(request=Lower)
    yield sync(request=Raise)

def req_3():
  while True:
    yield sync(waitFor=Lower,block=any_entering)
    yield sync(waitFor=Raise)

def req_4(railway):
  while True:
    yield sync(waitFor=Approaching(railway))
    yield sync(waitFor=Leaving(railway),block=Raise)
\end{lstlisting}
\vspace{-12pt}

\autoref{lst:LC-base} shows a b-program (a set of b-threads) that implements these requirements. The code is written using BPpy~\cite{yaacov_bppy_2023}, a framework for BP in Python. In BPpy, each b-thread is implemented as a Python generator---a function that can pause itself and pass data back to its caller at any point, using the \li{yield} idiom. It can then be resumed when re-invoked with the \li{send} method. The statements submitted by each b-thread are structured as \li{sync} class instances containing events or event sets labeled by the arguments \li{request}, \li{block}, \li{waitFor}. The execution mechanism starts by independently invoking each b-thread and awaiting its statement yield. Once all statements have been collected, an event is selected, and the program resumes its execution based on the aforementioned synchronization protocol.


There are four numbered b-threads, each aligned with its corresponding requirement. The \textit{first b-thread}, \bt{1}, waits for a train to approach the crossing zone on a specific railway and then dictates the exact sequence of events for each railway, as defined in the first requirement. The dictation is achieved by requesting the entering and leaving events in the correct order. The blocking idiom ensures that there will be no overlapping between successive train passages. The \textit{second b-thread} specifies the barriers behavior, waiting for a train to approach, then requesting to lower the barriers, and finally raising them as soon as possible. The \textit{third b-thread} blocks trains from entering while the barriers are up. Finally, the \textit{last b-thread} blocks the raising of the barriers while there is a train in the intersection zone.

Since no b-thread requests the approaching event, no event is selected at the first synchronization point, and the program halts. This is acceptable since the requirements do not specify how many trains should approach and on which railway. Moreover, these requirements do not specify the number of railways in the system. To address these ambiguities, we concrete this system to a specific level-crossing intersection with three railways, one for each train type: passengers, freight, and maintenance. Additionally, the train schedule dictates the following traffic requirements:

\begin{enumerate}[label=\textbf{R\arabic*.}, ref=R\arabic*]
    \setcounter{enumi}{4}
    \item \label{itm:req-5} Passenger trains may approach at any time. 
    \item \label{itm:req-6} Freight trains approach at least three times.
    \item \label{itm:req-7} Maintenance trains approach at least three times.
\end{enumerate}

The code in \autoref{lst:lc-trains} reflects these requirements. The first requirement states that passenger trains approach at any time, as denoted by the \li{while True} loop. The subsequent two requirements specify that the freight and maintenance trains approach \textit{at least} three times each. While the code specifies \textit{exactly} three times, these b-threads are still aligned with the requirements as they request three approaching events and do not block additional ones. This implies that other b-threads (e.g., those added later) may request additional approaching events.

\begin{lstlisting}[
% float=htpb,
caption={Additional b-threads that specify the trains requirements (\ref{itm:req-5}-\ref{itm:req-7}).},
label={lst:lc-trains}
] 
def req_5():
  while True:
    yield sync(request=Approaching("Passenger"))

def req_6(): 
  for i in range(3):
    yield sync(request=Approaching("Freight"))

def req_7():
  for i in range(3):
    yield sync(request=Approaching("Maintenance"))
\end{lstlisting}
\vspace{-5pt}

Upon execution, all b-threads are executed until they reach the first synchronization point, which is indicated by their first \li{yield sync} command. At this point, the last three b-threads request for their approaching events, while the remaining b-threads only include \li{waitFor} and \li{block} in their statements. The event arbiter then selects an event that has been requested and is not blocked. In this case, it selects one of the following events: \li{Approaching("Passenger")}, \li{Approaching("Freight")}, or \li{Approaching("Maintenance")}. Based on the selected event, all b-threads that requested or waited for it are resumed and advanced to their next synchronization point. For example, if the selected event is \li{Approaching("Freight")}, then \bt{6} is resumed and requests its approaching event for the second time, and \bt{2} advances as well and requests to lower the barrier. \bt{1} and \bt{4} have three instances, one for each railway; therefore, their ``Freight'' copies are also advanced. The rest of the b-threads remain in place, and their statements continue to the next round. The process is repeated and generates a sequence of system events.

Unlike most programming paradigms, BP does not force developers to choose a single behavior for the system to follow. For instance, this b-program does not dictate the order in which trains approach the intersection zone. Instead, the system can choose any behavior that complies with the b-threads and, therefore, meets the system requirements. This is in contrast to a standard programming language that dictates a specific sequence of actions. Since this forces programmers to specify beyond what is stated in the requirements, traditional programming paradigms are prone to over-specification, while BP avoids it.

We now present the main observation that motivated our research: BP lacks liveness execution semantics, which impedes the development and analysis of software systems. This observation will be illustrated by further discussing the level-crossing example. Suppose that after running the initial version of the system for a certain period, the scheduler introduces the following additional requirement:
 
\begin{enumerate}[label=\textbf{R\arabic*.}, ref=R\arabic*]
\setcounter{enumi}{7}
    \item \label{itm:req-8} A maintenance train must approach between two freight trains approaching.
\end{enumerate}

While we can modify b-threads \bt{6} and \bt{7} by introducing new conditions and statements, the BP paradigm motivates us to add a new b-thread for each new requirement. Thus, we add a new b-thread, presented in \autoref{lst:LC-r7}, that waits for a freight train to approach. Then, it blocks the next freight train from approaching until a maintenance train approaches. Note that this b-thread can be added (and removed) without affecting the other b-threads. This is an example of a purely additive change, where the system's behavior is altered to comply with a new requirement without changing the existing specification.

\begin{lstlisting}[
% float=htpb,
caption={A b-thread specifying \ref{itm:req-8}.},
label={lst:LC-r7}
]
def req_8():
  while True:
    yield sync(waitFor=Approaching("Freight"))
    yield sync(waitFor=Approaching("Maintenance"), 
               block=Approaching("Freight"))
\end{lstlisting}
\vspace{-3pt}

Examining the updated b-program's possible runs, i.e., sequences of events, reveals that all desired runs are possible. However, note that some undesired runs are also possible. Consider the order in which trains approach the crossing zone, presented as a sequence of letters, where $F$, $M$, and $P$ denote the approaching of freight, maintenance, and passenger trains, respectively. While $FMMFMF(P)^\omega$ and $FMFMFM(P)^\omega$ are desired runs, $MMMF(P)^\omega$ and $FMMMF(P)^\omega$ are not, since the freight train did not approach \textit{at least} three times. We use $(\hdots)^\omega$ to mark the periodic part of the run. Although this issue became evident with the addition of b-thread \bt{8}, it is rooted in the (mis)alignment of b-thread \bt{6} to \ref{itm:req-6}, as the b-thread does not enforce a freight train to approach \textit{at least} three times. This issue cannot be resolved without modifying BP semantics since specifying that something will occur at least three times is a liveness property---a property of the type ``something good will eventually occur'' as defined by Lamport~\cite{lamport_proving_1977}. However, the current semantics of BP only support the specification of safety properties.

This simple example effectively demonstrates the significance of incorporating liveness properties into BP models. It highlights the necessity of considering liveness even when modeling basic requirements, such as enforcing a minimum occurrence of an event.

\section{formal semantics for bp with liveness}
\label{sec:adding-liveness}

In this section, we present the \emph{must-finish} idiom and add it to BP semantics. We illustrate how enabling b-threads to use this idiom, as demonstrated in the b-thread in \autoref{lst:LC-live} with the \li{mustFinish=True} flag, allows for the specification of liveness requirements.


\vspace{-6pt}
\begin{lstlisting}[
label={lst:LC-live},
caption={A b-thread specifying that ``Three freight trains \textit{must eventually} approach'' using the \li{mustFinish} idiom.},
float=htbp
]
def req_6_mod(): 
  for i in range(3):
    yield sync(request=Approaching("Freight"), 
               mustFinish=True)
\end{lstlisting}
\vspace{-8pt}

We begin by formally defining the syntax and semantics of BP and making the necessary modifications to accommodate liveness specifications. The semantics of BP, as described by Harel et al.~\cite{harel_programming_2010}, define b-threads and their collective execution using labeled transition systems (LTSs). To designate states as \textit{must-finish}, we suggest introducing a labeling function to these LTSs. 

Recall that an LTS is defined as a tuple $\langle S,E,\rightarrow,init\rangle$, where $S$ is a set of states, $E$ is a set of events, $\rightarrow \subseteq S \times E \times S$ is a transition relation, and $init \in S$ is the initial state~\cite{keller_formal_1976}. The runs of such a transition system are sequences of the form $s^0 \xrightarrow{e^1} s^1 \xrightarrow{e^2} \dots \xrightarrow{e^i} s^i\dots$ where $s^0=init$, and $\forall i\in\mathbb{N}$, $s^i\in S$, $e^i\in E$, and $s^{i-1} \xrightarrow{e^i} s^i$.

We define b-threads, as defined by Harel et al.~\cite{harel_programming_2010}, only adding the labeling function $L$ that we propose:

\begin{definition}{(b-thread).}
A b-thread is a tuple $\langle S,E,\rightarrow,init,B,R,L\rangle$ where $\langle S,E,\rightarrow,init\rangle$ forms an LTS, $R\colon S\rightarrow 2^E$ associates a state with the set of events requested by the b-thread in that state, $B\colon S\rightarrow 2^E$ associates a state with the set of events blocked in that state, and $L \colon S\rightarrow \{0,1\}$ is a labeling function that indicates if the state is \emph{must-finish}.
\end{definition}

\begin{example}
\label{example:thread-with-liveness}
The b-thread described in \autoref{lst:LC-live} consists of four states: one for each iteration $i$ in the loop ($s_1,s_2,s_3$); and a terminal state, $s_4$, reached when the loop ends. At each iteration, the b-thread requests the \li{Approaching("Freight")} event, does not block any events and marks the state as \li{mustFinish}. Formally, for all $i \in \{1,2,3\}$:
$R(s_{i}) = \{\text{\li{Approaching("Freight")}}\}$, $B(s_{i}) = \emptyset$, and $L(s_{i}) = 1$.
Once the loop ends, the b-thread reaches the terminal state, $s_4$, where it does not submit any statements to the event arbiter. Hence, vacuously, we have:
$B(s_4) = \emptyset$, $R(s_4) = \emptyset$, and $L(s_4) = 0$.

\end{example}

We also modify the definition of a b-program to include b-threads that define the $L$ function:
\begin{definition}{(b-program).}
\label{def:b-program}
A b-program is a set of b-threads $\{\langle S_i,E_i,\rightarrow_i,init_i,R_i,B_i,L_i\rangle\}_{i=1}^n$
\end{definition}

Next, we proceed to define the semantics of the model in terms of their runs:

\begin{definition}{(run of a b-program).}
\label{def:run}
A run of a b-program, $\{\langle S_i,E_i,\rightarrow_i,init_i,R_i,B_i,L_i\rangle\}_{i=1}^n$, is a run of the LTS $\langle S,E,\rightarrow,init\rangle$, where $S = S_1\times\dots\times S_n$, $E=\bigcup_{i=1}^nE_i$, $init = \langle init_1,\dots,init_n\rangle$, and $\rightarrow$ includes a transition $\langle s_1,\dots ,s_n\rangle \xrightarrow{e} \langle s_1',\dots ,s_n'\rangle$ if and only if
\vspace{-2pt}
$$e \in \bigcup_{i=1}^n R_i(s_i) \bigwedge e \notin \bigcup_{i=1}^n B_i(s_i)$$ \vspace{-10pt} and 
$$\bigwedge\limits_{i=1}^n ((e \in E_i \Rightarrow s_i \xrightarrow{e} s_i' )\wedge (e \notin E_i \Rightarrow s_i = s_i' )).$$
\end{definition}

Here also, the definition of a run of a b-program has not changed, except for the inclusion of the b-threads $L$ function. 

The main addition to the semantics is the live run definition:

\begin{definition}{(live b-program run).}
\label{def:liverun}
We say that a run of a b-program is live if, for each b-thread $i$ and time $t$, there exists $t'>t$ such that 
$L_i(s_i^{t'}) = 0$.
\end{definition}

In other words, the definition states that a run of a b-program is considered live if, at any given time and for every b-thread, there exists a future time where the local must-finish state of that thread becomes zero. This means that no b-thread remains stuck in a must-finish state indefinitely.

\begin{example}
Considering the b-thread defined in \autoref{example:thread-with-liveness} and its implementation in \autoref{lst:LC-live}, we can observe that all runs which include a minimum of three instances of \li{Approching("Freight")} are live runs. Conversely, other runs do not meet the criteria for liveness due to the b-thread becoming trapped in some state $s_i$ where it requests an event that is never received while its \li{mustFinish} flag is on.
\end{example}

Our notion of liveness is based on the concept of hot cycles introduced by Harel et al.~\cite{harel_model-checking_2011} and further refined by Bar-Sinai~\cite{bar-sinai_extending_2020}. Hot cycles are characterized by labeling states as either ``hot'' or ``cold'' and using this labeling to identify two types of hot cycles: ``b-thread hot'' cycles, where a b-thread remains in a hot state throughout the entire cycle, and ``b-program hot'' cycles, where at least one b-thread is in a hot state in every state of the cycle. Conversely, from the complementary perspective of ``cold cycles'', a cycle is considered ``b-thread cold'' if every b-thread becomes cold during the cycle. By replacing ``cold'' with ``not must-finish'', this definition leads to ``live runs.'' ``B-program hot cycles'' can be replicated using ``must-finish" by adding a b-thread that constantly waits for all other b-threads to become cold and keeps its must-finish flag active until that condition is met.

Harel et al.~\cite{harel_model-checking_2011} and Bar-Sinai~\cite{bar-sinai_extending_2020} did not provide an execution mechanism for the model. Instead, they suggested verifying that the program does not have these cycles. This facilitates a ``develop-by-verification" approach, in which the program is repeatedly checked for violations and modified accordingly. However, this approach has inherent limitations, which we will discuss in the next section.

\section{why not develop-by-verification?}
\label{sec:develop-by-verification}
The ``develop-by-verification'' approach, proposed, e.g., in~\cite{harel_model-checking_2011,harel_non-intrusive_2014}, involves developers defining their liveness properties and verifying the system's correctness. This approach hinges on the fact that when a property is violated, the verifier provides a counter-example, i.e., a sequence of events that lead to the violation. Based on the provided example, the user can add one or more b-threads that monitor these behaviors and prevent the violation. 

For instance, applying verification on the level-crossing b-program introduced in \autoref{sec:motivation} to detect the presented liveness issue yielded the counter-example:
\li{Approaching("Freight")},\li{Lower},$($\li{Approaching("Passenger")} 
,\li{Entering("Passenger")},\li{Leaving("Passenger")}$)^\omega$.
Upon examination of this sequence, it became apparent that there is a potential for the passenger railway to create a liveness violation, resulting in the starvation of freight trains and preventing them from passing through. A possible solution for starvation involves scheduling to provide access to the denied resource for all components. Thus, we first tried mitigating this issue by adding the b-thread in \autoref{lst:LC-fix1}.

\begin{lstlisting}[
% float=htpb,
caption={A solution to the starvation of freight trains. It requires that a non-passenger train must approach between two passenger trains approaching in the first six rounds. },
label={lst:LC-fix1}
]
def avoid_freight_starvation():
  for i in range(6):
    yield sync(waitFor=Approaching("Passenger"))
    yield sync(waitFor=any_approaching, 
               block=Approaching("Passenger"))
\end{lstlisting}

However, as highlighted in \autoref{sec:motivation}, starvation represents just one aspect of the program violations, as \bt{8} can still prevent freight trains from crossing three times. Further, adding the b-thread in \autoref{lst:LC-fix1} inadvertently introduced a new violation, leading to a deadlock scenario upon completing non-passenger train crossings. Consequently, we made some additional attempts and added the b-thread detailed in \autoref{lst:LC-fix2}. It introduces adjusted scheduling that ensures each type of train approaches exactly once in the initial three rounds by preventing trains that have already approached from doing so again. This means the it allows unlimited passenger trains to cross once non-passenger trains have finished crossing.

The b-thread provided in \autoref{lst:LC-fix2} resolved all violations, and the program now meets all requirements. It, however, also restricts behaviors that do not violate any of the requirements. This over-specification can be an issue as we want to allow all feasible system behaviors. Allowing such flexibility becomes crucial, for example, when considering development cycles where future requirements might prohibit (block) the behaviors the system is left with or when optimizing the implemented system further based on some additional objective. Further, the implementation that includes the scheduling b-thread is not directly aligned with its requirements, which makes it harder to understand and maintain. 

\begin{lstlisting}[
% float=htpb,
caption={A solution to the liveness violation caused by the misalignment of b-thread \bt{6} and \ref{itm:req-6}. It requires each type of train to approach exactly once in the first three rounds. },
label={lst:LC-fix2}
]
def fix_scheduling_issues():
  for i in range(3):
    blocked = []
    e = yield sync(waitFor=any_app)
    blocked.append(e)
    e = yield sync(waitFor=any_app, block=blocked)
    blocked.append(e)
    yield sync(waitFor=any_app, block=blocked)
\end{lstlisting}

One solution to the problems we demonstrated with the ``develop-by-verification'' approach is to analyze the program's state space and identify a pattern that generalizes the violation detected by the model checker. Some verification techniques involve explicit traversal of the program's state space in search of violations, allowing us to examine it once the verification process concludes. Searching the entire space can be hard in certain scenarios and infeasible in others. Additionally, deciphering the various violating patterns can be a complex task even if we could readily obtain a program graph and look only at the vicinity of the detected violation~\cite{chapman_learning_2015,groce_error_2006,beer_explaining_2012,copty_efficient_2003}.



In conclusion, the key takeaway from this section is that while model-checkers can effectively identify liveness violations and provide counter-examples that can sometimes be used to fix systems, this approach can be challenging to implement and often leads to over-specification and models that do not align with the requirements. This paper proposes a correct-by-construction alternative where liveness requirements are directly modeled within the b-threads. By ensuring that the execution mechanism respects these specifications, we guarantee that the system's runs include all the required behaviors and nothing else. There is no need for a post-process of analyzing and fixing errors. All the requirements can be represented directly in the model, fully aligned with how users perceive them.


\section{using the dwyer patterns to measure the expressiveness and usability}
\label{sec:ltl-patterns}
To assess the effectiveness of the must-finish idiom in expressing liveness specifications in executable models, this section focuses on mapping common specifications to b-threads using the new idiom. We use a benchmark of specification patterns proposed by Dwyer et al.~\cite{dwyer_patterns_1999}. The dataset includes patterns that are common in industrial specifications. The collection encompasses 55 patterns, with 18 classified as liveness specifications (i.e., contain a liveness property). The supplementary material for this paper provides a full mapping of all the patterns to BP using the proposed \li{mustFinish} idiom proposed in this paper. For this section, we limited the presentation to patterns that include liveness properties, as the BP idioms before our addition were already sufficient for expressing safety specifications. The Dwyer patterns are organized by kind, e.g., absence or existence of properties, and are ordered by scope, e.g., globally or before an additional property. An example pattern of kind 'existence' with the scope 'after' is ``$p$ occurs after $q$,'' where $p$ and $q$ are pattern parameters, representing some non-temporal propositions. The b-thread in  \autoref{lst:dwyer-8} models this property. 
To ensure a fair pattern representation, in this section, we assume that each system event is a boolean assignment over a set of atomic propositions. This assumption aligns with standard formal specification practices. Additionally, BP offers extensions beyond the standard protocol that enable the modeling of systems with events containing multiple variables~\cite{katz_--fly_2019, harel_executing_2019}. These extensions provide a rich set of composable constraints, allowing for the description of both desired and undesired variable assignments. Thus, the b-thread in \autoref{lst:dwyer-8} waits for an event in which \li{q} holds and \li{p} does not, using the waited-for statement \li{And(q, Not(p))}. It then uses the new idiom to state that, eventually, $p$ must occur.

\begin{lstlisting}[
label={lst:dwyer-8},
caption={A b-thread specifying that ``$p$ occurs after $q$.''},
% float=htbp
]
def pattern():
    yield sync(waitFor=And(q, Not(p)))
    yield sync(waitFor=p, mustFinish=True)
\end{lstlisting}

To see how the must-finish idiom complements existing BP idioms for expressing undesired behaviors, consider the pattern ``$s$ responds to $p$, after $q$ until $r$.''
In simple terms, it specifies that after the occurrence of $q$, whenever $p$ happens, $s$ must eventually occur, and this condition persists until $r$ takes place. We provide a b-thread that models this property in \autoref{lst:dwyer-30}. This b-thread waits for \li{q} to hold without an immediate occurrence of \li{r}, then proceeds to wait for \li{p} without an \li{s}, which immediately responds to it. Lastly, it uses the must-finish idiom and the blocking idiom to specify that \li{s} must eventually occur until \li{r} does.

\begin{lstlisting}[
label={lst:dwyer-30},
caption={A b-thread specifying that ``$s$ responds to $p$, after $q$ until $r$.'' },
% float=htbp
]
def pattern():
  while True:
    e = yield sync(waitFor=And(q, Not(r)))
    if (not e.p) or e.s:
      e = yield sync(waitFor=Or(r, And(p, Not(s))))
    while not e.r:
      yield sync(waitFor=s,block=r,mustFinish=True)
      e = yield sync(waitFor=Or(r, And(p, Not(s))))
\end{lstlisting}

In \autoref{sec:enforcing-liveness-game}, we illustrate how we can view a b-thread as a B\"uchi Automaton (BA). Due to the determinism of b-program transitions, as defined in \autoref{def:run}, the resulting automaton is deterministic. However, this may pose a challenge as some properties, such as ``eventually always $p$,'' cannot be defined using deterministic BAs~\cite{manna_hierarchy_1990}.
To allow a direct translation of such requirements, we introduce non-determinism within b-threads. This is easily achieved by adding a new property to system events, not constrained by any b-thread. Consequently, non-deterministic choices can be made inside b-threads using the last chosen event property. For instance, the b-thread in \autoref{lst:dwyer-0} implements the above pattern. It first enters a loop that must be finished sometime in the future non-deterministically. Then, \li{p} is blocked forever.

\begin{lstlisting}[
label={lst:dwyer-0},
caption={A b-thread specifying that ``eventually always $p$.''
% GERA: Explain the b-threads in words. 
},
% float=htbp, 
]
def pattern():
  while True:
    e = yield sync(waitFor=true(), mustFinish=True)
    if non_deterministic_choice(e):
        break
  yield sync(block: Not(p))
\end{lstlisting}


All patterns presented here and the remaining Dwyer patterns available in the supplementary material were verified using the model checker available in BPpy. To ensure the correctness of each implemented b-thread and its ability to prohibit all undesired runs, the verifier examined a b-program containing the respective b-thread alongside an auxiliary b-thread that continuously requests all events. The verification process assessed the program with the assumption that it always eventually exits the must-finish states. This means that the program's specification was evaluated under the condition that if this assumption holds, the pattern must also be satisfied.

The conclusion of our experiment with the Dwyer patterns is that the new \li{mustFinish} idiom effectively expresses all the patterns in the dataset with small, simple b-threads, similar to the few examples in this section.

\section{two liveness enforcing execution mechanisms}

Since BP focuses on providing an executable specification language, we must deliver effective ways to execute specifications that contain the \li{mustFinish} idiom. Here, we present two mechanisms and explain the theory that supports their validity. We will examine their effectiveness in later sections.

\subsection{GBA-based execution mechanism}
\label{sec:enforcing-liveness-game}

In this section, we explain how automata theory can be used to maintain the liveness of b-programs. We transform the b-program into a Generalized B\"uchi Automaton (GBA), essentially a single-player game. The game's solution is then used to guide the event selection mechanism in selecting events that preserve the liveness of the system.

A Generalized B\"uchi Automaton (GBA) is defined as a tuple $\langle S, E, \to, I, F\rangle$, where $S$ is a finite set of states, $E$ is a finite set of events, $\to \subseteq S \times E \times S$ is the transition relation, $I \subseteq S$ is the set of initial states, and $F = \{F_1, \ldots, F_k\}$, where each $F_i$ is a set of accepting states. A run is considered accepting if it visits at least one state from each set in $F$ infinitely often.

%
A GBA induces a game that involves a single player placing a token on an initial state and moving it along the edges of the automaton. The game's objective is to find a path that visits each set of accepting states infinitely often. To achieve this objective, the player must follow a winning strategy that helps them make informed decisions about transitioning from one state to another. The ultimate goal is to reach the acceptance sets repeatedly and indefinitely. If the player fails to accomplish this goal, they will lose the game.

The GBA game can be solved using different techniques. One such method is the SCC-based algorithm~\cite{renault_three_2013}. This algorithm solves single-player games over generalized-B\"uchi acceptance conditions by finding the game automaton's strongly connected components (SCCs). It splits the SCCs into two groups: winning SCCs for the player and losing SCCs. The player's winning strategy is then to always move to a winning SCC.
The algorithm uses graph traversal techniques, such as depth-first search, to identify the SCCs and their win/lose status.
Once the SCCs have been identified, the algorithm calculates their acceptance status and the status of their outgoing transitions to separate them into winning and losing sets. Finally, it provides the player's winning strategy, always moving to a winning SCC. This algorithm is straightforward and efficient for solving single-player games with generalized B\"uchi acceptance conditions. For the experimental section of this paper, we used an optimized version of this algorithm implemented using the Spot library~\cite{duret-lutz_spot_2016}. 

The translation of a b-program to a GBA is:

\begin{definition}{(b-program liveness GBA).}
\label{def:buchi}
Let $\langle S, E, \to, init \rangle$ be an LTS of a b-program as in \autoref{def:run}.
A liveness GBA for this b-program is $\langle S,E,\rightarrow,init,\{F_i\}_{i=1}^n \rangle$ where $F_i = \{s \colon L_i(s) = 0\}$ for each b-thread $i$.
\end{definition}

By transforming the b-program in this manner, we can analyze the desired behaviors of the system and ensure it is satisfied during its operation. The solution to the game represents the optimal way for the system to satisfy the requirements, including any liveness requirements, while avoiding any undesirable behaviors. This provides a solid and robust framework for ensuring live operation.

We use the winning strategy to produce an event selection strategy for the system, guaranteeing that all the required sets will be visited and the winning condition will be satisfied.
Our implementation for the event arbiter for a b-program avoids events that take us out of the winning-states set and non-deterministically selects from the rest. Doing so ensures that the system operates according to the desired behaviors and never enters any must-finish regions it cannot leave. It also guarantees that all live runs remain possible, avoiding the over-specification phenomena mentioned in~\ref{sec:develop-by-verification}.

\subsection{MDP-based execution mechanism}
\label{sec:enforcing-liveness}
We now propose an alternative approach for obtaining an event selection mechanism using Markov decision processes (MDPs). In this approach, we formulate the problem as an MDP, where the states represent the different configurations of the system, and the actions correspond to the selectable events. 
We craft a reward function that reflects the desired behaviors to ensure the system meets the desired liveness requirements. By solving the Bellman equations, we obtain a policy that enforces liveness. We then introduce a liveness-preserving event selection strategy that utilizes this policy.

For simplicity, in this section, we focus on scenarios where there is only one b-thread in the system with a liveness requirement, referred to as the must-finish flag, and the labeling function for this b-thread is denoted as $L$. This does not limit the generality of our work, as multiple liveness requirements can always be expressed as a single one~\cite{baier_principles_2008}. 
Further, in \autoref{sec:examples}, we discuss how this approach can be generalized and allow multiple liveness requirements.


The following definition specifies how a b-program is converted into an MDP. An MDP is represented as a tuple with four elements:  $S$ represents the system states, $E$ the events it can execute, $R$ maps state-event-state transitions to real rewards, and $P$ maps state-event pairs to the probability of transitioning from one state to another given an event. The use of MDP allows reinforcement learning methods to find the desired policy for the system, as elaborated below.

The definition focuses on the reward function of the MDP representation of a b-program. This reward function, defined as $R\colon S\times E\times S \to \mathbb{R}$, reflects the desirability of each transition. This reward is designed to ensure liveness, meaning it captures the desired behavior of the b-program towards achieving its goals and satisfying its requirements.

\begin{definition}{(b-program liveness MDP).}
\label{def:bpmdp}
Let $\langle S, E, \to, init \rangle$ be the LTS of a b-program as in \autoref{def:run}. The b-program liveness MDP is $\langle S, E, R, P\rangle$ where
$$P(s'|s,e) = \begin{cases}
1 & \text{ if } s \xrightarrow{e} s'; \\
0 & \text{ otherwise; }
\end{cases}
$$
\vspace{-10pt}
and
$$R(s,e,s') = 
\left\{
	\begin{array}{ll}
		-1  & \text{if } L(s) = 0 \wedge L(s') = 1;\\
		1  & \text{if } L(s) = 1 \wedge L(s') = 0 ;\\
		0 & \text{otherwise.}
	\end{array}
\right.$$
\end{definition}
The reason for defining the rewards this way is to penalize runs, called ``live locks'', in which a b-program is caught in an infinite cycle whose states are labeled as \textit{must-finish}. We next use the standard action-value function approach to obtain a strategy that avoids such runs. 

The action-value function, $Q^\pi(s, e)$ of a reward function $R$, estimates the expected cumulative future reward obtained from $(s, e)$ when following a policy $\pi\colon S \to E$. 
The optimal action-value function $Q^*(s, e) = max_{\pi}Q^\pi(s, e)$ that provides maximal values in all states satisfies the well-known Bellman equation:
$Q^*(s, e) = \sum_{s'}P(s'|s, e)\Big(R(s,e,s') + \gamma max_{e'}Q_i^*(s', e')\Big)$, where $\gamma \in (0,1)$ is a discount factor that ensures that the accumulated reward of an infinite run is bounded. 
$Q^*(s, e)$ can be found using algorithms such as value iteration~\cite{sutton_introduction_1998}, Q-learning~\cite{watkins_q-learning_1992}, and deep Q-learning~\cite{mnih_human-level_2015}. 

To differentiate between live and non-live b-program runs, we introduce the notion of $Q^*$-compatible runs (stated in \autoref{def:Q-compatible run}) and a sampling distribution that models an event-selection strategy that chooses uniformly from this set (\autoref{def:Q-compatible policy}). These definitions enforce liveness requirements in the b-program execution, as we later show.

\begin{definition}{($Q^*$-comp. run).} 
\label{def:Q-compatible run}
Given $Q^*\colon S \times E \to \mathbb{R}$, the optimal action-value function that estimates the maximal expected future reward $R$, we define a run $s^0 \xrightarrow{e^0} s^1 \xrightarrow{e^1} \cdots $ of a b-program to be a $Q^*$-compatible run, if for each time $t$
\begin{equation}
\label{eqn:Q-compatible-condition}
\sum_{t'=0}^{t-1}R( s^{t'} ,e^{t'}, s^{t'+1} ) + Q^*(s^t,e^t) > -1
\end{equation}
\end{definition}

\begin{definition}{(sampling distribution for $Q^*$-comp. runs).} 
\label{def:Q-compatible policy}
We consider the following sampling distribution over the $Q^*$-compatible runs: For each state $s^t$, uniformly choose an event $e^t$ from all events such that Equation~\eqref{eqn:Q-compatible-condition} is satisfied.
\end{definition}

This sampling distribution can serve as an event selection strategy. Such a mechanism can ensure liveness properties during the b-program execution, 
as we show by proving in \aref{app:a} that a live b-program run is $Q^*$-compatible and that a $Q^*$-compatible b-program run is almost surely live:

\begin{theorem}
\label{theorem:ab}
A live run is $Q^*$-compatible, and a $Q^*$-compatible run is almost surely live.
\end{theorem}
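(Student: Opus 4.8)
\emph{Plan.} I would establish the two implications separately; both are driven by a single algebraic fact. By Definition~\ref{def:bpmdp} the reward depends only on the must-finish labels of the endpoints, and a check of the four cases gives $R(s,e,s')=L(s)-L(s')$. Hence the accumulated reward along any prefix of a run telescopes,
\[
\textstyle\sum_{t'=0}^{t-1}R\bigl(s^{t'},e^{t'},s^{t'+1}\bigr)=L(s^0)-L(s^t),
\]
so the $Q^*$-compatibility condition~\eqref{eqn:Q-compatible-condition} reads $L(s^0)-L(s^t)+Q^*(s^t,e^t)>-1$. I would also use that, because the MDP of Definition~\ref{def:bpmdp} is deterministic ($P$ concentrates on $s\xrightarrow{e}s'$), $Q^*(s,e)$ is the supremum over paths leaving $s$ through the edge $e$ of the discounted return, which by Abel summation equals $L(s)-(1-\gamma)\sum_{k\ge1}\gamma^{k-1}L(s^k)$; in particular $L(s)-1\le Q^*(s,e)\le L(s)$, with strict inequality $Q^*(s,e)>L(s)-1$ whenever some state reached after the first step of the path carries label $0$.

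\emph{A live run is $Q^*$-compatible.} Fix $t$. The tail $s^t\xrightarrow{e^t}s^{t+1}\xrightarrow{e^{t+1}}\cdots$ is itself an admissible path in the MDP, so $Q^*(s^t,e^t)$ dominates its discounted return; liveness (Definition~\ref{def:liverun}) supplies $m\ge1$ with $L(s^{t+m})=0$, hence $Q^*(s^t,e^t)>L(s^t)-1$. Substituting into the telescoped form yields $L(s^0)-L(s^t)+Q^*(s^t,e^t)>L(s^0)-1\ge-1$, which is~\eqref{eqn:Q-compatible-condition}. This half uses nothing about the sampling distribution.

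\emph{A sampled $Q^*$-compatible run is almost surely live.} Here I would first normalize $L(init)=0$ (the telescoped accounting needs it; a well-formed must-finish b-thread leaves its must-finish region and the program does not deadlock with the flag raised, and otherwise one prepends a fresh non-must-finish initial state). Consider a $Q^*$-compatible run and any prefix ending in a state $s^t$ with $L(s^t)=1$; then $t\ge1$, and writing $V^*(s)=\max_e Q^*(s,e)$ and expanding $Q^*(s^{t-1},e^{t-1})=\bigl(L(s^{t-1})-L(s^t)\bigr)+\gamma V^*(s^t)$, compatibility at time $t-1$ reduces (after the telescoping cancellation) to $V^*(s^t)>0$, so by the trajectory description a label-$0$ state is reachable from $s^t$. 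Let $s^t=u_0\to u_1\to\cdots\to u_d$ be a shortest such path, with events $a_0,\dots,a_{d-1}$, so $L(u_j)=1$ for $j<d$ and $L(u_d)=0$. The core step is to show every $a_j$ lies in the support of the sampling distribution at that point: the accumulated reward on reaching $u_j$ is $L(init)-L(u_j)=-1$, so the required inequality is just $Q^*(u_j,a_j)>0$, and using $R=L(s)-L(s')$, $V^*(u_d)\ge-1$, and that the path hits label $0$ exactly $d-j-1$ steps after $u_j$, one computes $Q^*(u_j,a_j)\ge\gamma^{\,d-j-1}(1-\gamma)>0$. Consequently, from any such $s^t$ the sampler follows the whole shortest escape path and revisits label $0$ within $d\le|S|$ steps with probability at least $|E|^{-|S|}=:\delta>0$.

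\emph{Conclusion and main obstacle.} If a sampled run failed to be live, some time $T$ would have $L(s^u)=1$ for all $u\ge T$; but the previous paragraph shows that, conditioned on any history reaching time $T+i|S|$ still inside the must-finish region, the run revisits label $0$ within the next $|S|$ steps with probability $\ge\delta$, uniformly in the past. Splitting $[T,\infty)$ into length-$|S|$ windows and applying the conditional Borel--Cantelli lemma makes the event ``$L\equiv1$ on $[T,\infty)$'' null, and a union over $T\in\mathbb N$ finishes the proof. I expect the first implication and the concluding probabilistic step to be routine; the delicate part --- and the easiest place to be off by one --- is the quantitative lemma that a shortest escape path stays inside the sampler's support, which is exactly where the $+1/-1$ reward shaping, the strict discount $\gamma<1$, the normalization $L(init)=0$, and finiteness of $S$ (for the uniform bound $\delta$) all come into play.
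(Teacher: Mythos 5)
Your proof is correct, and it shares the paper's two-proposition architecture (completeness: a live run is $Q^*$-compatible; soundness: a sampled $Q^*$-compatible run is almost surely live, via an ``escape path stays in the sampler's support'' lemma), but the way you obtain the key lemmas is genuinely different and in places sharper. The paper's \autoref{claim:a}--\autoref{claim:c} are proved by induction and a case analysis on the alternation pattern $0,\dots,0,-1,0,\dots,0,1$ of the rewards; you replace all of this with the single identity $R(s,e,s')=L(s)-L(s')$, so prefix sums telescope to $L(s^0)-L(s^t)$ and the discounted return of any tail has the closed form $L(s)-(1-\gamma)\sum_{k\ge1}\gamma^{k-1}L(s^k)$, from which the bounds $L(s)-1\le Q^*(s,e)\le L(s)$ (with strictness exactly when a label-$0$ state is reachable after the first step, i.e., the content of \autoref{claim:d}) fall out immediately; a side benefit is that your first implication does not need \autoref{assumption:a} at all, whereas the paper's \autoref{prop:a} relies on it through \autoref{claim:a}. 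For soundness, both arguments reduce to showing the sampler can follow a shortest escape path out of the must-finish region, but where \autoref{prop:b} asserts the compatibility inequality along that path and concludes with ``the probability that it will not choose any of these paths is zero,'' you actually verify the required positivity, $Q^*(u_j,a_j)\ge\gamma^{d-j-1}(1-\gamma)>0$, and supply the quantitative step the paper leaves implicit: a uniform escape probability $\delta\ge|E|^{-|S|}$ per window of length $|S|$, followed by a product/Borel--Cantelli bound making the bad event null. What your route buys is a shorter, less case-ridden derivation and a fully rigorous ``almost surely''; what the paper's route adds is only the explicit alternation description of \autoref{claim:b}, which becomes unnecessary once the telescoping identity is observed.
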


Practically, this guarantees that if there is a way to generate a live run, our $Q^*$-compatible event selection mechanism will choose one of the possible live runs. For specifications that are not solvable, i.e., no run satisfies all liveness and safety requirements together, our mechanism will find an event that it cannot proceed from, informing the specifiers that their requirements contain conflicts.

\section{does the approach scale with drl?}
\label{sec:drl}
The preceding section showed how the GBA and MDP-based approaches can be leveraged to enforce liveness in b-programs. However, a naive implementation of these methods requires a direct translation and an exhaustive exploration of the entire program's state space. Such explicit translation may become infeasible as systems grow in size and complexity. To overcome this, this section examines a variation of the MDP-based approach based on deep reinforcement learning (DRL). In this variation, the action-value function is learned directly through trial and error, i.e., by executing the b-program many times instead of translating it to an explicit MDP. Additionally, neural networks are employed to approximate the action-value function being sought, with the program's state space, represented by the local variables of the b-threads, serving as input to the network. 
This approach, shown to be effective in learning from high-dimensional state spaces~\cite{mnih_human-level_2015}, opens up the possibility of utilizing executable specifications in domains where state space exploration is not feasible.

For demonstration, we took the level crossing example presented in \autoref{sec:motivation} and removed all b-threads irrelevant to the presented liveness issue. Specifically, to evaluate the DRL-based technique, our focus centered on a refined b-program consisting of b-threads \bt{6}, \bt{7}, and \bt{8}. Further, this example was parameterized in three dimensions to increase its state space and complexity: (1) freight and maintenance trains approach (at least) \emph{n} times instead of three; (2) there are \emph(m) maintenance railways, not just a single one; (3) a maintenance train must approach anywhere between \emph{k} freight trains approaching instead of two. The code developed for this evaluation is available in the supplementary material. 

We focus on identifying a deterministic event selection mechanism capable of generating a single execution trace that meets the system's requirements. Specifically, we employed the Maskable PPO algorithm~\cite{huang_closer_2022}, also implemented in~\cite{hill_stable_2018}, with a standard multilayer perception (MLP) network with two hidden layers of size 64. The results of this experiment, conducted using an NVIDIA GeForce RTX 4090 GPU, are available in \autoref{tab:drl-results-1}. For each parameter configuration, we computed the state space size and the time it took for the algorithm to converge to a valid deterministic policy. Entries where the state space size is colored in \textcolor{gray!75}{gray} indicate that the measurement is estimated due to memory limitations (200GB RAM). These \textcolor{gray!75}{gray} entries also mark complexities for which a direct translation is no longer suitable. We observe that while the problem size is increased, the algorithm's learning remains relatively stable. This stability can be attributed to the fact that the increasing values of $n$ and $k$ do not impact the number of program variables, whereas increasing $m$ linearly increases the number of variables. 
As a result, the complexity of the agent's neural network does not experience substantial changes. 
This underscores the potential benefits of the DRL approach in scenarios where system complexity is primarily driven by the range of values that program variables can assume rather than their quantity.
\vspace{-5pt}

\begin{table}[!ht]
    \centering
    \captionsetup{font=scriptsize}
    \scriptsize
    \setlength{\tabcolsep}{3pt} 
\begin{threeparttable}
\begin{tabular}{c|c|c c c c c c}
\multicolumn{2}{c}{} & \multicolumn{6}{c}{\bfseries $\boldsymbol{n}$} \\
    \bfseries $\boldsymbol{m}$ &	\bfseries $\boldsymbol{k}$ &  \bfseries $\boldsymbol{50}$ & \bfseries \bfseries $\boldsymbol{100}$ & \bfseries \bfseries $\boldsymbol{150}$ & \bfseries \bfseries $\boldsymbol{200}$ & \bfseries $\boldsymbol{250}$ & \bfseries $\boldsymbol{300}$ \\
        \hline
        \multirow{4}{*}{$\boldsymbol{1}$} & $\boldsymbol{1}$ & 0.003/2.8 & 0.010/7.5 & 0.023/6.3 & 0.041/6.0 & 0.063/6.3 & 0.091/6.7 \\
                             & $\boldsymbol{2}$ & 0.006/2.7 & 0.023/5.8 & 0.051/7.0 & 0.091/6.6 & 0.141/3.7 & 0.203/26.5 \\
                             & $\boldsymbol{3}$ & 0.008/6.8 & 0.034/5.6 & 0.075/7.3 & 0.134/0.6 & 0.209/0.7 & 0.301/3.3 \\
                             & $\boldsymbol{4}$ & 0.011/2.3 & 0.044/6.1 & 0.098/0.5 & 0.175/4.6 & 0.273/6.7 & 0.394/6.5 \\
        \hline
        \multirow{4}{*}{$\boldsymbol{2}$} & $\boldsymbol{1}$ & 0.201/6.1 & 1.555/6.4 & 5.187/18.4 & 12.22/10.3 & 23.78/17.3 & \textcolor{gray!75}{34}/18.6 \\
                             & $\boldsymbol{2}$ & 0.476/6.0 & 3.737/1.2 & 12.53/0.8 & 29.61/7.3 & \textcolor{gray!75}{49}/5.7 & \textcolor{gray!75}{78}/8.7\\
                             & $\boldsymbol{3}$ & 0.625/6.7 & 4.930/0.4 & 16.56/7.7 & \textcolor{gray!75}{37}/4.5 & \textcolor{gray!75}{65}/7.2 & \textcolor{gray!75}{102}/4.7 \\
                             & $\boldsymbol{4}$ & 0.754/6.7 & 5.974/3.9 & 20.10/0.6 & \textcolor{gray!75}{45}/1.0 & \textcolor{gray!75}{79}/7.2 & \textcolor{gray!75}{124}/6.1 \\
        \hline
        \multirow{4}{*}{$\boldsymbol{3}$} & $\boldsymbol{1}$ & 13.66/6.4 & \textcolor{gray!75}{50}/9.5 & \textcolor{gray!75}{105}/6.9 & \textcolor{gray!75}{187}/7.4 & \textcolor{gray!75}{292}/25.1 & \textcolor{gray!75}{422}/8.4 \\
                             & $\boldsymbol{2}$ & 32.35/7.3 & \textcolor{gray!75}{119}/7.1 & \textcolor{gray!75}{251}/4.0 & \textcolor{gray!75}{442}/8.5 & \textcolor{gray!75}{691}/1.5 & \textcolor{gray!75}{998}/10.0 \\
                             & $\boldsymbol{3}$ & 39.25/3.3 & \textcolor{gray!75}{144}/6.7 & \textcolor{gray!75}{304}/0.9 & \textcolor{gray!75}{536}/1.2 & \textcolor{gray!75}{837}/8.1 & \textcolor{gray!75}{1207}/10.0 \\
                             & $\boldsymbol{4}$ & 45.57/5.9 & \textcolor{gray!75}{173}/4.0 & \textcolor{gray!75}{366}/7.0 & \textcolor{gray!75}{644}/7.5 & \textcolor{gray!75}{1005}/9.0 & \textcolor{gray!75}{1451}/7.5 \\
        \hline
        \multirow{4}{*}{$\boldsymbol{4}$} & $\boldsymbol{1}$ & \textcolor{gray!75}{109}/6.3 & \textcolor{gray!75}{401}/8.2 & \textcolor{gray!75}{844}/7.9 & \textcolor{gray!75}{1489}/16.7 & \textcolor{gray!75}{2328}/35.0 & \textcolor{gray!75}{3350}/10.5 \\
                             & $\boldsymbol{2}$ & \textcolor{gray!75}{258}/12.7 & \textcolor{gray!75}{948}/9.4 & \textcolor{gray!75}{1997}/8.3 & \textcolor{gray!75}{3523}/9.5 & \textcolor{gray!75}{5504}/15.6 & \textcolor{gray!75}{7951}/9.5 \\
                             & $\boldsymbol{3}$ & \textcolor{gray!75}{313}/4.7 & \textcolor{gray!75}{1148}/8.3 & \textcolor{gray!75}{2424}/4.6 & \textcolor{gray!75}{4267}/4.7 & \textcolor{gray!75}{6665}/10.3 & \textcolor{gray!75}{9620}/10.8 \\
                             & $\boldsymbol{4}$ & \textcolor{gray!75}{378}/3.2 & \textcolor{gray!75}{1387}/6.9 & \textcolor{gray!75}{2923}/1.5 & \textcolor{gray!75}{5146}/8.2 & \textcolor{gray!75}{8021}/6.8 & \textcolor{gray!75}{11551}/9.0 \\
        \hline

    \end{tabular}

\end{threeparttable}
     \caption{Evaluation of the Maskable PPO algorithm in the parameterized level crossing example. It displays the state space size (in millions) on the left and the median convergence time (in seconds) over 20 repetitions on the right. Median was used instead of averages to reduce the impact of outliers in measurements. Entries with state space sizes colored in \textcolor{gray!75}{gray} indicate estimated measurements.}
     \label{tab:drl-results-1}
\end{table}

\vspace{-5pt}

Integrating DRL algorithms into the MDP-based approach introduces a scalable and lightweight solution for enforcing liveness in b-programs. Currently, our implementation produces a deterministic policy, unlike the standard MDP-based method that allows a non-deterministic policy as defined in \autoref{def:Q-compatible policy}. Our experiment highlights that while this approach has limitations, it still holds significant value as it enables the execution of live specifications in scenarios where exhaustive state space exploration is impractical. 


\section{multiple vs. single liveness requirements}
\label{sec:examples}

This section evaluates the execution mechanisms by considering a use case that involves multiple liveness requirements. We first present a Sokoban game version that includes a single liveness requirement. Then, we generalize this version to accommodate multiple requirements. We use the generalized version to assess and illustrate how the GBA and MDP-based approaches effectively handle single and multiple requirement scenarios, demonstrating their efficiency.

Sokoban is a classic puzzle where a player controls a warehouse worker who pushes boxes to target locations while constrained by walls. Since boxes can only be pushed, not pulled, many moves are irreversible, and mistakes can lead to a state from which there is no solution. We refer to such scenarios as ``live locks.'' Sokoban has been extensively studied in the fields of model checking and synthesis, as demonstrated in various works~\cite{kant_ltsmin_2015, vinkhuijzen_symbolic_2020}. This paper focuses on demonstrating how liveness requirements can be used to prevent entering live locks. Specifically, we require that ``a box should be in a target location infinitely often.'' This ensures that we never move to a state with a box we cannot bring to a target location. Note that this condition is necessary and sufficient for a prefix of a live run, so we are not over-specifying. The Sokoban b-program and all tools used for evaluation in this section are available in the supplementary material.







We examined two possible liveness objectives for the model. The first is that ``all boxes are simultaneously placed at target locations infinitely often.'' This can be expressed as a single liveness requirement, modeled in the b-thread in \autoref{lst:sokoban-box-liveness-single}.
 
\begin{lstlisting}[
label={lst:sokoban-box-liveness-single},
caption={A b-thread specifying that ``all boxes are simultaneously placed at target locations infinitely often.''},
% float=htbp
]
def all_boxes_in_target(boxes, targets):
  while True:
    all_tar = all([box in targets for box in boxes] 
    yield sync(waitFor=All(),mustFinish=not all_tar)
\end{lstlisting}

We also examine the relaxed liveness requirement: ``Each box must be placed in a target location infinitely often" (not necessarily simultaneously). We replaced the b-thread in \autoref{lst:sokoban-box-liveness-single} to implement this requirement with the b-thread in \autoref{lst:sokoban-box-liveness-multiple}. This b-thread is added for each box in the program.

\begin{lstlisting}[
label={lst:sokoban-box-liveness-multiple},
caption={A b-thread specifying that ``Each box is placed at a target location infinitely often.''},
% float=htbp
]
@b_thread
def box(k, boxes, targets):
  while True:
    in_tar = boxes[k] in targets
    yield sync(waitFor=All(), mustFinish=not in_tar)
\end{lstlisting}

We conducted experiments on 20 randomly generated Sokoban boards of varying sizes containing 1-3 boxes, using open-source software based on the algorithm presented in~\cite{taylor_procedural_2011}. For the GBA-based approach, we used the degeneralization algorithm~\cite{babiak_compositional_2013} and solver implemented in the Spot library~\cite{duret-lutz_spot_2016}. For the MDP-based approach, we used the value-iteration algorithm implemented in the MDPtoolbox library~\cite{chades_mdptoolbox_2014} to solve the Bellman equation. The b-program was translated into a GBA and MDP in a pre-processing step using a depth-first search algorithm. Experiments were conducted using an Intel Xeon E5-2620 CPU with 64GB RAM.

The runtime of the evaluated method, depicted in \autoref{fig:g}, is presented as a function of each board's program state space size. In the case of the single liveness scenario, the graph in \autoref{fig:g1} shows that the automata-based method has a slight advantage over the MDP-based approach while keeping a consistent ratio, suggesting that the discrepancy is probably due to implementation. Both methods effectively handle relatively large graphs. Notably, the non-monotonicity in the graph highlights that the program's state space size is not the sole determining factor for complexity. This aligns with expectations, as the number of boxes also contributes to the overall complexity.
\vspace{-5pt}
\begin{figure}[ht]
     \begin{subfigure}{0.55\textwidth}
    \begin{tikzpicture}[scale=0.7, transform shape]
    \pgfplotsset{%
    width=1.25\textwidth,
    height=0.45\textwidth
    }
    \pic {graph_all};
    \end{tikzpicture}
    \caption{Single}
    \label{fig:g1}
     \end{subfigure}
     \qquad\qquad
     \begin{subfigure}{0.55\textwidth}
    \begin{tikzpicture}[scale=0.7, transform shape]
    \pgfplotsset{%
    width=1.25\textwidth,
    height=0.45\textwidth
    }
    \pic {graph_per_bt};
    \end{tikzpicture}
    \caption{Multiple}
    \label{fig:g2}
     \end{subfigure}
    \caption{The runtime of the GBA and MDP-based approaches for the single (\autoref{fig:g1}) and multiple (\autoref{fig:g2}) liveness requirements scenarios. Measurements are presented as a function of the program's state-space size.}
    \label{fig:g}
\end{figure}
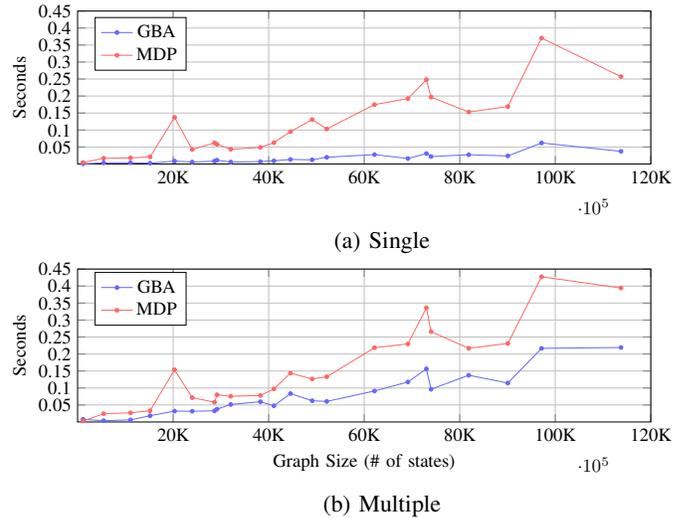
As for the multiple liveness requirements scenario, we applied a heuristic to the MDP-based approach to address the reward-based mechanism, where the calculated reward is the sum of rewards for each requirement. Although this technique has not been formally proven for correctness, it showed improved results in practice, as results presented in \autoref{fig:g2} reveal that transitioning from a single to multiple liveness requirements did not significantly impact the MDP-based approach's performance. This improvement can be attributed to the heuristic's ability to eliminate the necessity of expanding the state space. This step would have been required if an additional b-thread had been introduced, as suggested in~\cite{baier_principles_2008}, without affecting performance. Further, the results presented in \autoref{fig:g2} demonstrate a degradation in the GBA-based approach's performance when transitioning from a single to multiple liveness requirements. This increased runtime can be attributed to the degeneralization process used when multiple acceptance sets, each representing a different requirement, are combined.

\subsection{Resilience to errors}

In the theoretical part of this study, we considered an exact solution to the MDP, represented by $Q^*$. Nevertheless, even when the MDP is known, algorithms such as value iteration only provide an approximation for $Q^*$. Clearly, this is fairly accurate compared to scenarios where the MDP is unavailable, as discussed in \autoref{sec:drl}; however, minor perturbations can still lead to errors, especially when applying the above-mentioned reward sum heuristic. To ensure the practicality of this approach, we assessed its robustness by evaluating the performance of the $Q^*$-compatible event selection mechanism under increasing levels of Gaussian noise added to the estimated $Q$ values. The metric used was the live-run rate, representing the frequency of runs without a live lock occurrence. A boxplot aggregating the live-run rate for the 20 Sokoban examples is displayed in \autoref{fig:bp1}. We observe that the reward sum heuristic demonstrates resilience to errors and loses very little precision compared to the single liveness requirement scenario.

\vspace{-5pt}
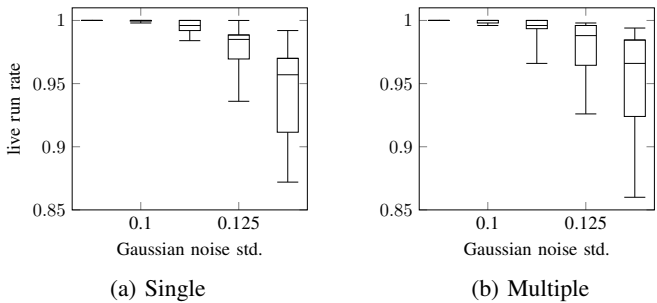
\begin{figure}[ht]
     \begin{subfigure}{0.23\textwidth}
    \begin{tikzpicture}[scale=0.7, transform shape]
    \pgfplotsset{%
    width=1.45\textwidth,
    height=1.3\textwidth
    }
    \pic {box_plot_all};
    \end{tikzpicture}
    \caption{Single}
    \label{fig:bp1}
     \end{subfigure}
     \qquad
     \begin{subfigure}{0.23\textwidth}
    \begin{tikzpicture}[scale=0.7, transform shape]
    \pgfplotsset{%
    width=1.45\textwidth,
    height=1.3\textwidth
    }
    \pic {box_plot_per_bt};
    \end{tikzpicture}
    \caption{Multiple}
    \label{fig:bp2}
     \end{subfigure}
     \caption{Performance of the $Q^*$-compatible event selection mechanism for the single (\autoref{fig:bp1}) and multiple (\autoref{fig:bp2}) liveness requirements scenarios under increasing levels of Gaussian noise added to the estimated $Q$ values. The metric used was the live-run rate, representing the frequency of runs without a live lock occurrence out of 1000 sampled runs.}
        \label{fig:bp}
\end{figure}
\vspace{-5pt}

\section{related work}
\label{sec:related-work}
As mentioned in \autoref{sec:motivation}, Harel et al.~\cite{harel_model-checking_2011} presented a methodology and a supporting model-checking tool for verifying b-programs. 
In this methodology, b-threads can mark states as hot, and the model checker verified liveness properties using a nested DFS algorithm, searching for cycles in the state graph that contain only hot states. That means that the verified liveness property is that the b-program is always eventually not in a hot state. Bar-Sinai~\cite{bar-sinai_extending_2020} later distinguished between cycles in which a single b-thread is indefinitely hot and cycles where the b-program, as a whole, is indefinitely hot.

Our definition of liveness is inspired by both~\cite{harel_model-checking_2011} and~\cite{bar-sinai_extending_2020}. The work in this paper extends the above verification approach to synthesizing liveness-enforcing event selection mechanisms for BP. While similar research has been carried out for other languages (e.g.~\cite{dippolito_synthesizing_2013, uzam_improved_2006}), these languages are not focused on a direct alignment with requirements. This extension elevates BP and brings it closer to its goal of aligning with requirements. To the best of our knowledge, the work presented here is the first that has been made on liveness enforcing execution mechanism of b-programs. Further, it is the first to propose an alternative to the automata-based approach, based on MDP formulation, which is especially useful when it is computationally intractable to guarantee liveness.

 

General smart execution methods for BP were also introduced in~\cite{harel_smart_2002, eitan_adaptive_2011, weinstock_online_2015, elyasaf_using_2019}. These methods are not aimed at liveness requirements specifically but provide some foundations on which this paper relies. 
In~\cite{eitan_adaptive_2011}, Eitan and Harel extended the semantics of BP with reinforcements, allowing applications that specify broad goals, in addition to what should be done or not done at every step. Reinforcements are captured in~\cite{eitan_adaptive_2011} by b-threads, each contributing a narrow assessment of the current situation relative to a longer-term goal. 
Learning and adapting allow programmers to focus on specifying needs while leaving the details and optimizations to the program, thus simplifying development. The MDP-based approach suggested in this paper continues, in some ways, the work of~\cite{eitan_adaptive_2011}. Our approach extends the capability to allow the integration of deep reinforcement learning and ensures liveness in BP.
In another work, Weinstock~\cite{weinstock_online_2015} extended the BP execution mechanism with an online heuristic search in program state space that allows programmers to develop programs while relying on a smart event selection mechanism to resolve non-determinism in a way that maximizes a defined heuristic function.

Significant efforts have been made toward translating logical specifications into reinforcement learning. In this line of research, reinforcement learning is being used to automatically construct optimal policies concerning linear temporal logic (LTL) specifications~\cite{alur_framework_2022, camacho_ltl_2019}. We note that reinforcement learning policies are executable but hard to explain and maintain, while LTL possesses the opposite characteristics. Hence, practicing this approach will require going back and forth in each update, resulting in possibly prolonged development cycles. Along these lines, the same can be argued about the direct formulation of the MDP or GBA. 


\section{conclusion}
\label{sec:conclusion}



We have demonstrated how liveness requirements can be directly expressed in the BP executable modeling paradigm by adding the ``must-finish'' idiom. 
The outcome of this study is comprehensive execution mechanisms for b-programs, capable of generating runs that meet all requirements, including liveness. Our method can also be used during development for the early identification of liveness conflicts. The BP approach is particularly suitable for such incremental refinement, as it supports refinement by adding new b-threads without changing existing ones~\cite{harel_programming_2010}. While the primary focus is on BP, the concepts presented here can be extended to other executable-specification languages. Specifically, the ``must-finish'' idiom and the mapping of a specification into a GBA or MDP, which targets the compliance of liveness requirements, are general and can be adapted to other languages.

\section{Data Availability}
The complete code for the evaluations and the tools we have implemented are publicly available at \url{https://anonymous.4open.science/r/bp-liveness-4C6F}.



%


\bibliographystyle{IEEEtran}
\bibliography{references,more}

\begin{thebibliography}{10}
\providecommand{\url}[1]{#1}
\csname url@samestyle\endcsname
\providecommand{\newblock}{\relax}
\providecommand{\bibinfo}[2]{#2}
\providecommand{\BIBentrySTDinterwordspacing}{\spaceskip=0pt\relax}
\providecommand{\BIBentryALTinterwordstretchfactor}{4}
\providecommand{\BIBentryALTinterwordspacing}{\spaceskip=\fontdimen2\font plus
\BIBentryALTinterwordstretchfactor\fontdimen3\font minus
  \fontdimen4\font\relax}
\providecommand{\BIBforeignlanguage}[2]{{%
\expandafter\ifx\csname l@#1\endcsname\relax
\typeout{** WARNING: IEEEtran.bst: No hyphenation pattern has been}%
\typeout{** loaded for the language `#1'. Using the pattern for}%
\typeout{** the default language instead.}%
\else
\language=\csname l@#1\endcsname
\fi
#2}}
\providecommand{\BIBdecl}{\relax}
\BIBdecl

\bibitem{delahaye_common_2013}
M.~Delahaye, N.~Kosmatov, and J.~Signoles, ``Common specification language for
  static and dynamic analysis of {C} programs,'' in \emph{Proceedings of the
  28th {Annual} {ACM} {Symposium} on {Applied} {Computing}}.\hskip 1em plus
  0.5em minus 0.4em\relax Coimbra, Portugal: Association for Computing
  Machinery, 2013, pp. 1230--1235.

\bibitem{gajski_specc_2012}
D.~D. Gajski, J.~Zhu, R.~Dömer, A.~Gerstlauer, and S.~Zhao, \emph{{SpecC}:
  {Specification} language and methodology}.\hskip 1em plus 0.5em minus
  0.4em\relax Berlin, Germany: Springer Science \& Business Media, 2012.

\bibitem{johnsen_abs_2010}
E.~B. Johnsen, R.~Hähnle, J.~Schäfer, R.~Schlatte, and M.~Steffen, ``{ABS}:
  {A} core language for abstract behavioral specification,'' in
  \emph{International {Symposium} on {Formal} {Methods} for {Components} and
  {Objects}}.\hskip 1em plus 0.5em minus 0.4em\relax Berlin, Heidelberg:
  Springer Berlin Heidelberg, 2010, pp. 142--164, backup Publisher: Springer.

\bibitem{kastens2005}
U.~Kastens, ``Executable specifications for language implementation,'' in
  \emph{Programming Language Implementation and Logic Programming}, ser.
  Lecture Notes in Computer Science, vol. 714, 2005.

\bibitem{schubert2006}
P.~J. Schubert, L.~Vitkin, and F.~Winters, ``Executable specs: What makes one,
  and how are they used?'' \emph{SAE Technical Paper}, 2006.

\bibitem{elyasaf_context-based_2018}
A.~Elyasaf, A.~Marron, A.~Sturm, and G.~Weiss, ``A {Context}-{Based}
  {Behavioral} {Language} for {IoT}.'' in \emph{{MODELS} {Workshops}}, 2018,
  pp. 485--494.

\bibitem{bar-sinai_scenario_2019}
M.~Bar-Sinai, A.~Elyasaf, A.~Sadon, and G.~Weiss, ``A scenario based on-board
  software and testing environment for satellites,'' in \emph{The 59th {Israel}
  {Annual} {Conference} on {Aerospace} {Sciences} ({IACAS}), 2019}, 2019.

\bibitem{katz_--fly_2019}
G.~Katz, A.~Marron, A.~Sadon, and G.~Weiss, ``On-the-fly construction of
  composite events in scenario-based modeling using constraint solvers,''
  \emph{arXiv preprint arXiv:1909.00408}, 2019.

\bibitem{bar-sinai_provengo_2023}
\BIBentryALTinterwordspacing
M.~Bar-Sinai, A.~Elyasaf, G.~Weiss, and Y.~Weiss, ``Provengo: {A} {Tool}
  {Suite} for {Scenario} {Driven} {Model}-{Based} {Testing},'' Aug. 2023,
  arXiv:2308.15938 [cs]. [Online]. Available:
  \url{http://arxiv.org/abs/2308.15938}
\BIBentrySTDinterwordspacing

\bibitem{harel_programming_2010}
D.~Harel, A.~Marron, and G.~Weiss, ``Programming coordinated behavior in
  java,'' in \emph{European {Conference} on {Object}-{Oriented}
  {Programming}}.\hskip 1em plus 0.5em minus 0.4em\relax Maribor, Slovenia:
  Springer, 2010, pp. 250--274.

\bibitem{elyasaf_context-oriented_2021}
\BIBentryALTinterwordspacing
A.~Elyasaf, ``Context-{Oriented} {Behavioral} {Programming},''
  \emph{Information and Software Technology}, vol. 133, p. 106504, May 2021,
  publisher: Elsevier BV. [Online]. Available:
  \url{http://www.sciencedirect.com/science/article/pii/S095058492030094X}
\BIBentrySTDinterwordspacing

\bibitem{dwyer_patterns_1999}
\BIBentryALTinterwordspacing
M.~B. Dwyer, G.~S. Avrunin, and J.~C. Corbett,
  ``\BIBforeignlanguage{en}{Patterns in property specifications for
  finite-state verification},'' in \emph{\BIBforeignlanguage{en}{Proceedings of
  the 21st international conference on {Software} engineering}}.\hskip 1em plus
  0.5em minus 0.4em\relax Los Angeles California USA: ACM, May 1999, pp.
  411--420. [Online]. Available:
  \url{https://dl.acm.org/doi/10.1145/302405.302672}
\BIBentrySTDinterwordspacing

\bibitem{harel_behavioral_2012}
D.~Harel, A.~Marron, and G.~Weiss, ``Behavioral programming,''
  \emph{Communications of the ACM}, vol.~55, no.~7, pp. 90--100, 2012.

\bibitem{leveson_safety_1987}
N.~Leveson and J.~Stolzy, ``Safety {Analysis} {Using} {Petri} {Nets},''
  \emph{IEEE Transactions on Software Engineering}, vol. SE-13, no.~3, pp.
  386--397, 1987.

\bibitem{elyasaf_what_2023}
A.~Elyasaf, T.~Yaacov, and G.~Weiss, ``What {Petri} {Nets} {Oblige} us to
  {Say}: {Comparing} {Approaches} for {Behavior} {Composition},'' \emph{IEEE
  Transactions on Software Engineering}, vol.~49, no.~04, pp. 2303--2317, Apr.
  2023, place: Los Alamitos, CA, USA Publisher: IEEE Computer Society.

\bibitem{liu_-penda_2014}
\BIBentryALTinterwordspacing
B.~Liu, M.~Ghazel, and A.~Toguyéni, ``{OF}-{PENDA}: {A} {Software} {Tool} for
  {Fault} {Diagnosis} of {Discrete} {Event} {Systems} {Modeled} by {Labeled}
  {Petri} {Nets},'' in \emph{{ADECS} 2014, {Proceedings} of the 1st
  {International} {Workshop} on {Petri} {Nets} for {Adaptive}
  {Discrete}-{Event} {Control} {Systems}, co-located with 35th {International}
  {Conference} on {Application} and {Theory} of {Petri} {Nets} and
  {Concurrency} ({Petri} {Nets} 2014), {Tunis}, {Tunisia}, {June} 24, 2014},
  ser. {CEUR} {Workshop} {Proceedings}, M.~Khalgui and Z.~Li, Eds., vol.
  1161.\hskip 1em plus 0.5em minus 0.4em\relax CEUR-WS.org, 2014, pp. 20--35.
  [Online]. Available: \url{http://ceur-ws.org/Vol-1161/11610020.pdf}
\BIBentrySTDinterwordspacing

\bibitem{ghazel_customizable_2016}
M.~Ghazel and B.~Liu, ``A customizable railway benchmark to deal with fault
  diagnosis issues in {DES},'' in \emph{2016 13th {International} {Workshop} on
  {Discrete} {Event} {Systems} ({WODES})}, Xi'an China, 2016, pp. 177--182.

\bibitem{mazzeo_systematic_1997}
\BIBentryALTinterwordspacing
A.~Mazzeo, N.~Mazzocca, S.~Russo, and V.~Vittorini, ``A {Systematic} {Approach}
  to the {Petri} {Net} {Based} {Specification} of {Concurrent} {Systems},'' in
  \emph{Safety-{Critical} {Real}-{Time} {Systems}}.\hskip 1em plus 0.5em minus
  0.4em\relax Boston, MA: Springer US, 1997, pp. 3--20. [Online]. Available:
  \url{https://doi.org/10.1007/978-1-4757-6463-5_1}
\BIBentrySTDinterwordspacing

\bibitem{yaacov_bppy_2023}
\BIBentryALTinterwordspacing
T.~Yaacov, ``{BPpy}: {Behavioral} programming in {Python},'' \emph{SoftwareX},
  vol.~24, p. 101556, Dec. 2023. [Online]. Available:
  \url{https://www.sciencedirect.com/science/article/pii/S2352711023002522}
\BIBentrySTDinterwordspacing

\bibitem{lamport_proving_1977}
L.~Lamport, ``Proving the correctness of multiprocess programs,'' \emph{IEEE
  transactions on software engineering}, vol. SE-3, no.~2, pp. 125--143, 1977.

\bibitem{keller_formal_1976}
R.~M. Keller, ``Formal verification of parallel programs,''
  \emph{Communications of the ACM}, vol.~19, no.~7, pp. 371--384, 1976.

\bibitem{harel_model-checking_2011}
D.~Harel, R.~Lampert, A.~Marron, and G.~Weiss, ``Model-checking behavioral
  programs,'' in \emph{Proceedings of the ninth {ACM} international conference
  on {Embedded} software}.\hskip 1em plus 0.5em minus 0.4em\relax New York, NY,
  USA: Association for Computing Machinery, 2011, pp. 279--288.

\bibitem{bar-sinai_extending_2020}
M.~Bar-Sinai, ``Extending {Behavioral} {Programming} for {Model}-{Driven}
  {Engineering},'' {PhD} {Thesis}, PhD Thesis, Ben-Gurion University of the
  Negev, Israel, 2020.

\bibitem{harel_non-intrusive_2014}
\BIBentryALTinterwordspacing
D.~Harel, G.~Katz, A.~Marron, and G.~Weiss,
  ``\BIBforeignlanguage{en}{Non-intrusive {Repair} of {Safety} and {Liveness}
  {Violations} in {Reactive} {Programs}},'' in
  \emph{\BIBforeignlanguage{en}{Transactions on {Computational} {Collective}
  {Intelligence} {XVI}}}, ser. Lecture {Notes} in {Computer} {Science},
  R.~Kowalczyk and N.~T. Nguyen, Eds.\hskip 1em plus 0.5em minus 0.4em\relax
  Berlin, Heidelberg: Springer, 2014, pp. 1--33. [Online]. Available:
  \url{https://doi.org/10.1007/978-3-662-44871-7_1}
\BIBentrySTDinterwordspacing

\bibitem{chapman_learning_2015}
M.~Chapman, H.~Chockler, P.~Kesseli, D.~Kroening, O.~Strichman, and
  M.~Tautschnig, ``\BIBforeignlanguage{English}{Learning the {Language} of
  {Error}},'' in \emph{\BIBforeignlanguage{English}{Proceedings of {ATVA} 2015:
  {Automated} {Technology} for {Verification} and {Analysis}}}, ser. Lecture
  {Notes} in {Computer} {Science}.\hskip 1em plus 0.5em minus 0.4em\relax
  Springer, Oct. 2015, vol. 9364, pp. 114--130.

\bibitem{groce_error_2006}
\BIBentryALTinterwordspacing
A.~Groce, S.~Chaki, D.~Kroening, and O.~Strichman,
  ``\BIBforeignlanguage{en}{Error explanation with distance metrics},''
  \emph{\BIBforeignlanguage{en}{International Journal on Software Tools for
  Technology Transfer}}, vol.~8, no.~3, pp. 229--247, Jun. 2006. [Online].
  Available: \url{https://doi.org/10.1007/s10009-005-0202-0}
\BIBentrySTDinterwordspacing

\bibitem{beer_explaining_2012}
\BIBentryALTinterwordspacing
I.~Beer, S.~Ben-David, H.~Chockler, A.~Orni, and R.~Trefler,
  ``\BIBforeignlanguage{en}{Explaining counterexamples using causality},''
  \emph{\BIBforeignlanguage{en}{Formal Methods in System Design}}, vol.~40,
  no.~1, pp. 20--40, Feb. 2012. [Online]. Available:
  \url{https://doi.org/10.1007/s10703-011-0132-2}
\BIBentrySTDinterwordspacing

\bibitem{copty_efficient_2003}
\BIBentryALTinterwordspacing
F.~Copty, A.~Irron, O.~Weissberg, N.~Kropp, and G.~Kamhi,
  ``\BIBforeignlanguage{en}{Efficient debugging in a formal verification
  environment},'' \emph{\BIBforeignlanguage{en}{International Journal on
  Software Tools for Technology Transfer (STTT)}}, vol.~4, no.~3, pp. 335--348,
  May 2003. [Online]. Available:
  \url{http://link.springer.com/10.1007/s10009-002-0097-y}
\BIBentrySTDinterwordspacing

\bibitem{harel_executing_2019}
D.~Harel, G.~Katz, A.~Marron, A.~Sadon, and G.~Weiss, ``Executing
  {Scenario}-{Based} {Specification} with {Dynamic} {Generation} of {Rich}
  {Events},'' in \emph{International {Conference} on {Model}-{Driven}
  {Engineering} and {Software} {Development}}.\hskip 1em plus 0.5em minus
  0.4em\relax Springer, 2019, pp. 246--274.

\bibitem{manna_hierarchy_1990}
\BIBentryALTinterwordspacing
Z.~Manna and A.~Pnueli, ``\BIBforeignlanguage{en}{A hierarchy of temporal
  properties (invited paper, 1989)},'' in
  \emph{\BIBforeignlanguage{en}{Proceedings of the ninth annual {ACM} symposium
  on {Principles} of distributed computing}}.\hskip 1em plus 0.5em minus
  0.4em\relax Quebec City Quebec Canada: ACM, Aug. 1990, pp. 377--410.
  [Online]. Available: \url{https://dl.acm.org/doi/10.1145/93385.93442}
\BIBentrySTDinterwordspacing

\bibitem{renault_three_2013}
E.~Renault, A.~Duret-Lutz, F.~Kordon, and D.~Poitrenaud, ``Three {SCC}-{Based}
  {Emptiness} {Checks} for {Generalized} {Büchi} {Automata},'' in \emph{Logic
  for {Programming}, {Artificial} {Intelligence}, and {Reasoning}},
  K.~McMillan, A.~Middeldorp, and A.~Voronkov, Eds.\hskip 1em plus 0.5em minus
  0.4em\relax Berlin, Heidelberg: Springer Berlin Heidelberg, 2013, pp.
  668--682.

\bibitem{duret-lutz_spot_2016}
A.~Duret-Lutz, A.~Lewkowicz, A.~Fauchille, T.~Michaud, E.~Renault, and L.~Xu,
  ``Spot 2.0—a framework for {LTL} and-automata manipulation,'' in
  \emph{Automated {Technology} for {Verification} and {Analysis}: 14th
  {International} {Symposium}, {ATVA} 2016, {Chiba}, {Japan}, {October} 17-20,
  2016, {Proceedings}}.\hskip 1em plus 0.5em minus 0.4em\relax Cham: Springer
  International Publishing, 2016, pp. 122--129, backup Publisher: Springer.

\bibitem{baier_principles_2008}
C.~Baier and J.-P. Katoen, \emph{Principles of model checking}.\hskip 1em plus
  0.5em minus 0.4em\relax Cambridge, Massachusetts: MIT press, 2008.

\bibitem{sutton_introduction_1998}
R.~S. Sutton, A.~G. Barto, and {others}, \emph{Introduction to reinforcement
  learning}.\hskip 1em plus 0.5em minus 0.4em\relax Cambridge, Massachusetts:
  MIT press Cambridge, 1998, vol. 135.

\bibitem{watkins_q-learning_1992}
C.~J. Watkins and P.~Dayan, ``Q-learning,'' \emph{Machine learning}, vol.~8,
  no. 3-4, pp. 279--292, 1992.

\bibitem{mnih_human-level_2015}
V.~Mnih, K.~Kavukcuoglu, D.~Silver, A.~A. Rusu, J.~Veness, M.~G. Bellemare,
  A.~Graves, M.~Riedmiller, A.~K. Fidjeland, G.~Ostrovski, and {others},
  ``Human-level control through deep reinforcement learning,'' \emph{nature},
  vol. 518, no. 7540, pp. 529--533, 2015.

\bibitem{huang_closer_2022}
\BIBentryALTinterwordspacing
S.~Huang and S.~Ontañón, ``A {Closer} {Look} at {Invalid} {Action} {Masking}
  in {Policy} {Gradient} {Algorithms},'' \emph{The International FLAIRS
  Conference Proceedings}, vol.~35, May 2022, arXiv:2006.14171 [cs, stat].
  [Online]. Available: \url{http://arxiv.org/abs/2006.14171}
\BIBentrySTDinterwordspacing

\bibitem{hill_stable_2018}
\BIBentryALTinterwordspacing
A.~Hill, A.~Raffin, M.~Ernestus, A.~Gleave, A.~Kanervisto, R.~Traore,
  P.~Dhariwal, C.~Hesse, O.~Klimov, A.~Nichol, M.~Plappert, A.~Radford,
  J.~Schulman, S.~Sidor, and Y.~Wu, ``Stable {Baselines},'' 2018. [Online].
  Available: \url{https://github.com/hill-a/stable-baselines}
\BIBentrySTDinterwordspacing

\bibitem{kant_ltsmin_2015}
G.~Kant, A.~Laarman, J.~Meijer, J.~van~de Pol, S.~Blom, and T.~van Dijk,
  ``{LTSmin}: {High}-{Performance} {Language}-{Independent} {Model}
  {Checking},'' in \emph{Tools and {Algorithms} for the {Construction} and
  {Analysis} of {Systems}}, C.~Baier and C.~Tinelli, Eds.\hskip 1em plus 0.5em
  minus 0.4em\relax Berlin, Heidelberg: Springer Berlin Heidelberg, 2015, pp.
  692--707.

\bibitem{vinkhuijzen_symbolic_2020}
L.~Vinkhuijzen and A.~Laarman, ``Symbolic {Model} {Checking} with {Sentential}
  {Decision} {Diagrams},'' in \emph{Dependable {Software} {Engineering}.
  {Theories}, {Tools}, and {Applications}}, J.~Pang and L.~Zhang, Eds.\hskip
  1em plus 0.5em minus 0.4em\relax Cham: Springer International Publishing,
  2020, pp. 124--142.

\bibitem{taylor_procedural_2011}
J.~Taylor and I.~Parberry, ``Procedural generation of sokoban levels,'' in
  \emph{Proceedings of the {International} {North} {American} {Conference} on
  {Intelligent} {Games} and {Simulation}}, 2011, pp. 5--12.

\bibitem{babiak_compositional_2013}
T.~Babiak, T.~Badie, A.~Duret-Lutz, M.~Křetínský, and J.~Strejček,
  ``Compositional {Approach} to {Suspension} and {Other} {Improvements} to
  {LTL} {Translation},'' in \emph{Model {Checking} {Software}}, E.~Bartocci and
  C.~R. Ramakrishnan, Eds.\hskip 1em plus 0.5em minus 0.4em\relax Berlin,
  Heidelberg: Springer Berlin Heidelberg, 2013, pp. 81--98.

\bibitem{chades_mdptoolbox_2014}
I.~Chadès, G.~Chapron, M.-J. Cros, F.~Garcia, and R.~Sabbadin, ``{MDPtoolbox}:
  a multi-platform toolbox to solve stochastic dynamic programming problems,''
  \emph{Ecography}, vol.~37, no.~9, pp. 916--920, 2014, publisher: Wiley Online
  Library.

\bibitem{dippolito_synthesizing_2013}
N.~D'ippolito, V.~Braberman, N.~Piterman, and S.~Uchitel, ``Synthesizing
  nonanomalous event-based controllers for liveness goals,'' \emph{ACM
  Transactions on Software Engineering and Methodology (TOSEM)}, vol.~22,
  no.~1, pp. 1--36, 2013, publisher: ACM New York, NY, USA.

\bibitem{uzam_improved_2006}
M.~Uzam and M.~Zhou, ``An improved iterative synthesis method for liveness
  enforcing supervisors of flexible manufacturing systems,''
  \emph{International Journal of Production Research}, vol.~44, no.~10, pp.
  1987--2030, 2006, publisher: Taylor \& Francis.

\bibitem{harel_smart_2002}
D.~Harel, H.~Kugler, R.~Marelly, and A.~Pnueli, ``Smart play-out of behavioral
  requirements,'' in \emph{International {Conference} on {Formal} {Methods} in
  {Computer}-{Aided} {Design}}.\hskip 1em plus 0.5em minus 0.4em\relax
  Portland, OR, USA: Springer, Berlin, Heidelberg, 2002, pp. 378--398, backup
  Publisher: Springer.

\bibitem{eitan_adaptive_2011}
N.~Eitan and D.~Harel, ``Adaptive behavioral programming,'' in \emph{2011
  {IEEE} 23rd {International} {Conference} on {Tools} with {Artificial}
  {Intelligence}}.\hskip 1em plus 0.5em minus 0.4em\relax Boca Raton, Florida,
  USA: IEEE, Nov. 2011, pp. 685--692.

\bibitem{weinstock_online_2015}
O.~M. Weinstock, ``Online search in behavioral programming models,'' in
  \emph{Proceedings of the {ACM} {Student} {Research} {Competition} at
  {MODELS}}, vol. 1503.\hskip 1em plus 0.5em minus 0.4em\relax Ottawa, Canada:
  CEUR-WS.org, 2015, pp. 58--63.

\bibitem{elyasaf_using_2019}
A.~Elyasaf, A.~Sadon, G.~Weiss, and T.~Yaacov, ``Using {Behavioural}
  {Programming} with {Solver}, {Context}, and {Deep} {Reinforcement} {Learning}
  for {Playing} a {Simplified} {RoboCup}-{Type} {Game},'' in \emph{2019
  {ACM}/{IEEE} 22nd {International} {Conference} on {Model} {Driven}
  {Engineering} {Languages} and {Systems} {Companion} ({MODELS}-{C})}.\hskip
  1em plus 0.5em minus 0.4em\relax IEEE, 2019, pp. 243--251.

\bibitem{alur_framework_2022}
\BIBentryALTinterwordspacing
R.~Alur, S.~Bansal, O.~Bastani, and K.~Jothimurugan, ``A {Framework} for
  {Transforming} {Specifications} in {Reinforcement} {Learning},'' in
  \emph{Principles of {Systems} {Design}: {Essays} {Dedicated} to {Thomas} {A}.
  {Henzinger} on the {Occasion} of {His} 60th {Birthday}}.\hskip 1em plus 0.5em
  minus 0.4em\relax Cham: Springer Nature Switzerland, 2022, pp. 604--624.
  [Online]. Available: \url{https://doi.org/10.1007/978-3-031-22337-2_29}
\BIBentrySTDinterwordspacing

\bibitem{camacho_ltl_2019}
\BIBentryALTinterwordspacing
A.~Camacho, R.~Toro~Icarte, T.~Q. Klassen, R.~Valenzano, and S.~A. McIlraith,
  ``{LTL} and {Beyond}: {Formal} {Languages} for {Reward} {Function}
  {Specification} in {Reinforcement} {Learning},'' in \emph{Proceedings of the
  {Twenty}-{Eighth} {International} {Joint} {Conference} on {Artificial}
  {Intelligence}, {IJCAI}-19}.\hskip 1em plus 0.5em minus 0.4em\relax
  California, United States: International Joint Conferences on Artificial
  Intelligence Organization, Jul. 2019, pp. 6065--6073. [Online]. Available:
  \url{https://doi.org/10.24963/ijcai.2019/840}
\BIBentrySTDinterwordspacing

\end{thebibliography}

\clearpage
\appendix[Proof of \autoref{theorem:ab}]
\label{app:a}

We now formulate and prove a series of claims and propositions towards proving \autoref{theorem:ab}.
To simplify the proofs, we assume that all initial states of all the b-threads are not labeled as \emph{must-finish}: 
\begin{assumption}
\label{assumption:a}
$L(init) = 0$.
\end{assumption}
As demonstrated with our examples in \autoref{sec:motivation} and \autoref{sec:examples}, this assumption is reasonable in practice. It does not restrict generality since adding an extra initial state that is not labeled as \emph{must-finish} is always possible. 

The first claim says that the accumulated rewards over a finite prefix of a run are either $0$ or $-1$:

\begin{claim}
\label{claim:a}
For every infinite b-program run $l=s^0 \xrightarrow{e^0} s^1 \xrightarrow{e^1} \cdots $ and time $t\geq 0$:
$$\sum_{k=0}^{t}R( s^{k-1} ,e^{k-1}, s^{k}) = 
\begin{cases}
  0  & \text{if } L(s^t)=0;\\
 -1  & \text{otherwise. }\\
\end{cases}
$$ 
\end{claim}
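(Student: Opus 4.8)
The plan is to prove Claim~\ref{claim:a} by induction on $t$, using the telescoping structure of the reward function from Definition~\ref{def:bpmdp} and Assumption~\ref{assumption:a} as the base case.

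\textbf{Base case.} For $t=0$, the sum $\sum_{k=0}^{0} R(s^{k-1}, e^{k-1}, s^k)$ — read with the convention that the only term is the degenerate one, or equivalently an empty sum — evaluates to $0$, and by Assumption~\ref{assumption:a} we have $L(s^0) = L(init) = 0$, which matches the claimed value. (One should state the indexing convention explicitly here, since the summand is written with index $k-1$; the intended reading is that the prefix of rewards collected up to reaching state $s^t$ is meant, so at $t=0$ no transition has been taken and the sum is $0$.)

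\textbf{Inductive step.} Assume the statement holds at time $t$; we show it at time $t+1$. Write $\Sigma_{t+1} = \Sigma_t + R(s^t, e^t, s^{t+1})$, where $\Sigma_t$ denotes the accumulated reward up to $s^t$. We split into cases on the pair $(L(s^t), L(s^{t+1}))$. If $L(s^t)=0$, the induction hypothesis gives $\Sigma_t = 0$; then either $L(s^{t+1})=0$, in which case $R(s^t,e^t,s^{t+1})=0$ by the "otherwise" clause of Definition~\ref{def:bpmdp} and $\Sigma_{t+1}=0$, or $L(s^{t+1})=1$, in which case $R(s^t,e^t,s^{t+1})=-1$ and $\Sigma_{t+1}=-1$ — both consistent with the claim. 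If instead $L(s^t)=1$, the induction hypothesis gives $\Sigma_t=-1$; then either $L(s^{t+1})=1$, giving reward $0$ and $\Sigma_{t+1}=-1$, or $L(s^{t+1})=0$, giving reward $+1$ and $\Sigma_{t+1}=0$ — again matching the claim in each case. This exhausts all four combinations and closes the induction.

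\textbf{Main obstacle.} The argument is essentially a bookkeeping induction, so there is no deep difficulty; the only delicate point is fixing the summation convention so that the claim's right-hand side genuinely depends only on $L(s^t)$ and the base case aligns with Assumption~\ref{assumption:a}. Once that is pinned down, the four-case check in the inductive step is routine, since the reward function is precisely designed so that the running sum acts as an indicator that records whether the current state is must-finish (value $-1$) or not (value $0$).
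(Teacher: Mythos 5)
Your proof is correct and follows essentially the same route as the paper's: induction on $t$ with Assumption~\ref{assumption:a} as the base case and a case split on the labels of consecutive states, using the definition of $R$ to track the running sum. Your explicit note about the off-by-one indexing convention in the stated sum is a reasonable clarification that the paper's own proof glosses over, but it does not change the substance of the argument.
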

\begin{proof}
By induction on $t$. The base case is given by \autoref{assumption:a}. 
Assuming that the claim is true for $t-1$, If $L(s^{t-1})=L(s^t)$ then $R(s^{t-1},e^{t-1},s^t)=0$ and the claim follows. If $L(s^{t-1})=0$  and $L(s^t)=1$  then $R(s^{t-1},e^{t-1},s^t)=-1$  and we get that 
$$\sum_{k=0}^{t}R( s^{k-1} ,e^{k-1}, s^{k}) = \sum_{k=0}^{t-1}R( s^{k-1} ,e^{k-1}, s^{k})  - 1 = -1.$$
If $L(s_i^{t-1})=1$  and $L(s_i^t)=0$  then $R(s^{t-1},e^{t-1},s^t)=1$  and we get that $$\sum_{k=0}^{t}R( s^{k-1} ,e^{k-1}, s^{k}) = \sum_{k=0}^{t-1}R( s^{k-1} ,e^{k-1}, s^{k})  + 1 = 0.$$
Hence, all cases are consistent with the claimed equation.
\end{proof}

We next show that the sequence of rewards is a repetition of the form:
$0,\dots,0,-1,0,\dots,0,1$ where $0,\dots, 0$ means, possibly empty, sequence of $0$s. There may be an infinite tail of zeroes at the end, or the alternation can go forever.

\begin{claim}
\label{claim:b}
For every infinite b-program run $l=s^0 \xrightarrow{e^0} s^1 \xrightarrow{e^1} \cdots $, let $(t_k)_{k=0}^{n}$ be the sequence of times where $R(s^{t_k},e^{t_k},s^{t_{k}+1})\neq 0$. The length of the sequence can be finite, infinite, or empty, i.e., $n \in \mathbb{N} \cup \{\infty, -1\}$. Then for every $0 {\leq} k {\leq} n\colon R(s^{t_k},e^{t_k},s^{t_{k}+1}) =  (-1)^{k+1}$.
\end{claim}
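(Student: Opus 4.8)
The plan is to prove Claim~\ref{claim:b} by induction on $k$, using Claim~\ref{claim:a} as the main workhorse. The key observation is that Claim~\ref{claim:a} tells us the accumulated reward up to any time $t$ is completely determined by the label $L(s^t)$: it is $0$ when $L(s^t)=0$ and $-1$ when $L(s^t)=1$. Since the accumulated reward only changes at the times $t_k$ (where the reward is nonzero), and it takes only the two values $0$ and $-1$, consecutive nonzero rewards must alternate in sign, and the first one must be $-1$ (moving the accumulator from $0$ to $-1$).

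\medskip

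\noindent\textbf{Setup.} Write $P_t = \sum_{j=0}^{t-1} R(s^j, e^j, s^{j+1})$ for the accumulated reward strictly before time $t$ (matching the indexing of Claim~\ref{claim:a}, which gives $P_{t+1} \in \{0,-1\}$ for all $t \ge 0$, and $P_0 = 0$ by \autoref{assumption:a}). By definition of the sequence $(t_k)$, the partial sum $P_t$ is constant on each interval of times between consecutive elements of $(t_k)$, and jumps by exactly $R(s^{t_k}, e^{t_k}, s^{t_k+1})$ as we pass $t_k$. Concretely, $P_{t_0} = 0$ (no nonzero reward has occurred yet, using $P_0 = 0$), and for $k \ge 1$, $P_{t_k} = \sum_{j=0}^{k-1} R(s^{t_j}, e^{t_j}, s^{t_j+1})$.

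\medskip

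\noindent\textbf{Induction.} For the base case $k=0$: we have $P_{t_0} = 0$, and $P_{t_0 + 1} = P_{t_0} + R(s^{t_0}, e^{t_0}, s^{t_0+1})$. By Claim~\ref{claim:a}, $P_{t_0+1} \in \{0, -1\}$; since $R(s^{t_0}, e^{t_0}, s^{t_0+1}) \neq 0$ and the reward function only takes values in $\{-1, 0, 1\}$, we must have $R(s^{t_0}, e^{t_0}, s^{t_0+1}) = -1 = (-1)^{0+1}$, so $P_{t_0+1} = -1$. For the inductive step, suppose the claim holds for all $j < k$, so that $P_{t_k} = \sum_{j=0}^{k-1}(-1)^{j+1}$, which equals $0$ if $k$ is even and $-1$ if $k$ is odd. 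Then $P_{t_k+1} = P_{t_k} + R(s^{t_k}, e^{t_k}, s^{t_k+1})$, and again Claim~\ref{claim:a} forces $P_{t_k+1} \in \{0,-1\}$ while the reward is nonzero and in $\{-1,1\}$. If $k$ is even, $P_{t_k} = 0$ forces the reward to be $-1 = (-1)^{k+1}$; if $k$ is odd, $P_{t_k} = -1$ forces the reward to be $+1 = (-1)^{k+1}$. This closes the induction.

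\medskip

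\noindent I do not expect a serious obstacle here; the only delicate point is bookkeeping around the index conventions --- making sure that ``the accumulated reward before $t_k$'' is correctly related to the partial sums in Claim~\ref{claim:a} (which uses $s^{k-1}$-style indexing and runs up to $t$), and handling the edge cases $n = -1$ (empty sequence, where there is nothing to prove) and $n = \infty$ (the induction simply runs over all $k \in \mathbb{N}$). One should also note that the times $t_k$ are strictly increasing and that $P_t$ genuinely does not change strictly between $t_{k-1}+1$ and $t_k$, which is immediate from the definition of $(t_k)$ as \emph{all} the times where the reward is nonzero. With that care taken, the result is a direct consequence of Claim~\ref{claim:a} together with the fact that the reward takes only the three values $-1, 0, 1$.
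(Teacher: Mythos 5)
Your proof is correct. It differs from the paper's in the tool it leans on: the paper argues directly from the definition of $R$ in \autoref{def:bpmdp}, first noting that $R(s^{t_0},e^{t_0},s^{t_0+1})=-1$ by \autoref{assumption:a}, and then deriving a contradiction from two consecutive equal nonzero rewards --- since the reward is zero strictly between $t_k$ and $t_{k+1}$, the label cannot change there, so $L(s^{t_k+1})=L(s^{t_{k+1}})$, which forces the two surrounding nonzero rewards to have opposite signs. You instead route everything through \autoref{claim:a}, using the invariant that the partial sum $P_t$ lives in $\{0,-1\}$ and arguing by induction that each nonzero increment must flip the accumulator between these two values. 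The two arguments are mathematically equivalent, because \autoref{claim:a} says precisely that $P_t=-L(s^t)$, so ``the partial sum is $0$ or $-1$'' is a restatement of ``the label is $0$ or $1$''; your version trades the paper's self-contained label-tracking for a dependency on \autoref{claim:a}, which buys a slightly more mechanical induction at the cost of making \autoref{claim:b} logically downstream of \autoref{claim:a} (harmless here, since \autoref{claim:a} does not depend on \autoref{claim:b}). Your attention to the indexing quirk in \autoref{claim:a} (the formal $k=0$ term) and to the edge cases $n=-1$ and $n=\infty$ is appropriate and resolves correctly.
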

\begin{proof}
Based on \autoref{def:bpmdp} and \autoref{assumption:a}, it is clear from that $R(s^{t_0},e^{t_0},s^{t_{0}+1})=-1$. Assume towards contradiction that there is $k{\geq} 0$ such that
$R(s^{t_k},e^{t_k},s^{t_{k}+1})  = R(s^{t_{k+1}},e^{t_{k+1}},s^{t_{k+1}+1})$.
Since $R(s^{t},e^{t},s^{t+1})=0$  for every $t_k < t < t_{k+1}$, by the same definition, $L(s^{t_{k}+1}) = L(s^{t_{k+1}})$. This contradicts the definition of $R$ where it is apparent that  $ L(s^{t_{k}+1}) = L(s^{t_{k+1}})$ 
implies
$R(s^{t_k},e^{t_k},s^{t_{k}+1})  \neq R(s^{t_{k+1}},e^{t_{k+1}},s^{t_{k+1}+1})$.
\end{proof}
 
Using the above observation regarding the alternation of the sequence, we obtain a lower bound for the residual discounted accumulated reward of live runs based on the current state's label:
 
\begin{claim}
\label{claim:c}
For every infinite live b-program run $l=s^0 \xrightarrow{e^0} s^1 \xrightarrow{e^1} \cdots $, time $t\geq 0$, and $\gamma < 1$:
$$\sum_{k=t}^{\infty}\gamma^{k} R(s^{k},e^{k},s^{k+1}) > 
\begin{cases}
  -1  & \text{if }L(s^{t})=0;\\
 0  & \text{if }L(s^{t})=1.\\
\end{cases}
$$ 
\end{claim}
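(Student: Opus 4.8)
The plan is to use \autoref{claim:b}, which tells us the nonzero rewards along a run form an alternating sequence $-1, +1, -1, +1, \dots$ occurring at times $t_0 < t_1 < t_2 < \dots$. Combined with the liveness hypothesis, this sequence must be structured so that every $-1$ is eventually followed by a matching $+1$: indeed, liveness says the run is never stuck forever in a must-finish state, so after each time the label flips from $0$ to $1$ (a $-1$ reward) there is a later time where it flips back to $0$ (a $+1$ reward). Hence if the alternating sequence is infinite, consecutive pairs $(-1$ at $t_{2j}$, $+1$ at $t_{2j+1})$ exhaust it; if it is finite, it ends with a $+1$ (a return to label $0$), again by liveness. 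The key consequence is that from any time $t$ onward, the nonzero rewards still to come, read in order, start with $-1$ if $L(s^t)=1$ (we are mid-must-finish stretch and owe a return) and otherwise form complete $(-1,+1)$ pairs.

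First I would fix $t$ and enumerate the nonzero-reward times at or after $t$ as $u_0 < u_1 < u_2 < \dots$, with reward values $r_0, r_1, \dots$. Using \autoref{claim:b} and \autoref{assumption:a}: if $L(s^t)=0$ then $r_0 = -1, r_1 = +1, r_2 = -1, \dots$; if $L(s^t)=1$ then $r_0 = +1, r_1 = -1, r_2 = +1, \dots$. The tail sum is $\sum_{j\ge 0}\gamma^{u_j} r_j$. In the case $L(s^t)=0$, group the terms into pairs $\gamma^{u_{2j}}(-1) + \gamma^{u_{2j+1}}(+1)$. Since $u_{2j} < u_{2j+1}$ and $0<\gamma<1$, each pair equals $\gamma^{u_{2j+1}} - \gamma^{u_{2j}} > -\gamma^{u_{2j}} \ge -\gamma^{u_{2j}}$; summing and crudely bounding $\sum_j \gamma^{u_{2j}} \le \sum_{k\ge t}\gamma^k = \gamma^t/(1-\gamma)$ is not quite tight enough, so instead I would argue directly: the first pair alone is $> -\gamma^{u_0} \ge -1$, and every subsequent pair is strictly positive, so the whole tail is $> -\gamma^{u_0} \ge -\gamma^0 = -1$; if there are finitely many nonzero terms the same estimate works since the sequence still ends with a $+1$, and if there are none the sum is $0 > -1$. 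In the case $L(s^t)=1$, the tail is $\gamma^{u_0}(+1) + \sum_{j\ge1}\gamma^{u_j}r_j$ where the bracketed tail after $u_0$ is exactly a run starting in label $0$, so by the previous case it exceeds $-\gamma^{u_1} \ge -\gamma^{u_0+1}$; hence the full tail is $> \gamma^{u_0} - \gamma^{u_0+1} = \gamma^{u_0}(1-\gamma) > 0$. If the sequence is empty (no more must-finish stretches, impossible here since $L(s^t)=1$ forces at least the return) — but in fact $L(s^t)=1$ plus liveness guarantees $u_0$ exists, so this degenerate subcase does not arise.

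The main obstacle is getting the strict inequality $> -1$ in the $L(s^t)=0$ case with the right bookkeeping: one must be careful that the pairing is legitimate (the return reward matching each $-1$ genuinely exists and precedes the next $-1$), which is precisely where the liveness hypothesis and \autoref{claim:b}'s alternation are both essential, and that the estimate $\gamma^{u_0} \le 1$ together with strictness of $\gamma^{u_{2j+1}} - \gamma^{u_{2j}} > -\gamma^{u_{2j}}$ yields a strict bound rather than just $\ge -1$. Handling the finite-length and empty cases of the sequence $(t_k)$ uniformly, so that \autoref{assumption:a} (used as the base) stays in force, is the other detail to get right. A cleaner alternative I would consider is to avoid explicit pairing and instead combine \autoref{claim:a} (the finite prefix sum is always $0$ or $-1$) with an Abel summation / telescoping argument on $\sum_{k\ge t}\gamma^k R(s^k,e^k,s^{k+1})$, writing it in terms of the partial sums $P_m = \sum_{k=t}^{m} R$ which lie in $\{0,-1\}$ shifted appropriately; liveness then ensures $P_m = -1$ only on a set of times that is not a tail, giving the strict bound. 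I expect the pairing argument to be the shortest route in practice.
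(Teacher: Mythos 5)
Your overall strategy matches the paper's: use \autoref{claim:b} together with liveness to pin down the alternating pattern of nonzero rewards after time $t$, then bound the discounted tail by pairing them off. However, there is a sign error at the decisive step. With the grouping you state, pair $j$ equals $-\gamma^{u_{2j}}+\gamma^{u_{2j+1}}$ with $u_{2j}<u_{2j+1}$; since $0<\gamma<1$ makes $a\mapsto\gamma^{a}$ strictly decreasing, each such pair is strictly \emph{negative}, not positive. So the assertion that ``every subsequent pair is strictly positive'' is false, and the argument collapses to ``first pair $>-\gamma^{u_0}$ plus infinitely many negative terms,'' which does not yield $>-1$. You yourself observe that summing the per-pair lower bounds $-\gamma^{u_{2j}}$ is not tight enough; the claimed positivity of the later pairs was supposed to rescue this, and it does not hold.

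The claim is nevertheless true, and the repair is a shifted pairing --- which is exactly what the paper's proof does. Match each $+1$ (at time $q_k$) with the \emph{next} $-1$ (at time $r_{k+1}>q_k$), so each matched pair contributes $\gamma^{q_k}-\gamma^{r_{k+1}}>0$; what remains unmatched is the initial $-\gamma^{r_0}\ge -1$ and possibly a trailing positive $+\gamma^{q_{n}}$, giving the strict bound $>-1$. Equivalently, the tail is an alternating series $-\gamma^{u_0}+\gamma^{u_1}-\gamma^{u_2}+\cdots$ with strictly decreasing positive terms $\gamma^{u_0}>\gamma^{u_1}>\cdots$, hence it strictly exceeds its first partial sum $-\gamma^{u_0}\ge -1$. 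Your reduction of the $L(s^t)=1$ case to the $L(s^t)=0$ case is a nice idea and goes through once the $L=0$ bound is repaired to $>-\gamma^{u_0}$; the paper instead handles $L(s^t)=1$ directly by pairing $q_k$ with $r_k$ (here $q_k<r_k$, so each pair is positive and the sum is $>0$). Your treatment of the degenerate cases (no nonzero rewards; $L(s^t)=1$ forcing at least one $+1$ by liveness) agrees with the paper's.
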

\begin{proof}
Similarly to the sequence used in \autoref{claim:b}, let $(q_k)_{k=0}^{n_q}$ be the sequence of times after $t$  where $R(s^{q_k},e^{q_k},s^{q_k+1})=1$, and $(r_k)_{k=0}^{n_r}$ be the sequence of times after $t$ where $R(s^{r_k},e^{r_k},s^{r_k+1})=-1$. Note that since run $l$ is live, we have that $n_q \geq n_r$; otherwise, the run ends with infinitely many b-program's must-finish states. 
If both sequences are empty, it is clear from \autoref{def:bpmdp} and \autoref{assumption:a} that $L(s_i^t)=0$. In this case, all rewards are zero, and the claim holds trivially. Furthermore, if $L(s^t)=1$ and $n_r < 0$, then $(q_k)_{k=0}^{n_q}$ is not empty, i.e., $n_q \geq 0$ or else the run ends with infinitely many must-finish states. In this case, we get that
$$\sum_{k=t}^{\infty}\gamma^{k} R(s^{k},e^{k},s^{k+1}) =\sum_{k=0}^{n_q}\gamma^{q_k} > 0.$$
If the sequences are not empty and $L(s_i^t)=1$, from \autoref{claim:b} we get that $q_k < r_k$ for each $k \leq n_r$ and then
$$\sum_{k=t}^{\infty}\gamma^{k} R(s^{k},e^{k},s^{k+1}) \ge\sum_{k=0}^{n_r}(\gamma^{q_k} - \gamma^{r_{k}})> 0.$$
If the sequences are not empty and $L(s_i^t)=0$, from \autoref{claim:b} we get that $r_{k} < q_{k} < r_{k+1}$ for each $k<n_r-1$ and $$\sum_{k=t}^{\infty}\gamma^{k} R(s^{k},e^{k},s^{k+1}) \ge- \gamma^{r_{0}}+\sum_{k=0}^{n_r-1}(\gamma^{q_k} - \gamma^{r_{k+1}})  > -1.$$
Thus, the claim holds for all cases.
\end{proof}

In the opposite direction to the previous claim, we also show that if the optimal policy can achieve a positive residual discounted accumulated reward, it is possible to get to a state that is not labeled as \emph{must-finish} (we will later use that to construct a live run).

\begin{claim}
\label{claim:d}
If $Q^*(s^{t},e^{t}) > 0$ then there is a path $s^t \xrightarrow{e^t} s^{t+1} \xrightarrow{e^{t+1}} \cdots \xrightarrow{e^{t+m_{t-1}}} s^{t+m_t}$  such that $L(s^{t+m_t})=0$.
\end{claim}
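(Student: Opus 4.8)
The plan is to argue by contrapositive: if every path starting from $s^t$ with the event $e^t$ stays forever in \emph{must-finish} states (i.e.\ $L=1$ at every state reached after $s^{t+1}$), then the residual discounted reward collected along \emph{any} continuation is at most $0$, which forces $Q^*(s^t,e^t)\le 0$. So suppose no such path exists, meaning for all $m\ge 1$ and all paths $s^t\xrightarrow{e^t}s^{t+1}\xrightarrow{e^{t+1}}\cdots\xrightarrow{e^{t+m-1}}s^{t+m}$ we have $L(s^{t+m})=1$. By \autoref{def:bpmdp}, once we commit to $e^t$ from $s^t$ the successor $s^{t+1}$ is determined, and then $L(s^{t+1})=1$. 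From that point on, every state on every continuation has label $1$, so the transition reward $R(s^k,e^k,s^{k+1})$ is $0$ whenever $L(s^k)=L(s^{k+1})=1$, i.e.\ for all $k\ge t+1$; the only possibly nonzero reward in the whole tail is $R(s^t,e^t,s^{t+1})$, which is $-1$ if $L(s^t)=0$ and $0$ if $L(s^t)=1$. In either case the total residual reward $\sum_{k=t}^\infty \gamma^{k-t}R(s^k,e^k,s^{k+1})$ is $\le 0$ for every continuation.

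The second step is to translate this bound on actual trajectories into a bound on $Q^*$. Since the MDP is deterministic (the transition kernel $P$ in \autoref{def:bpmdp} puts all mass on $s\xrightarrow{e}s'$), the optimal action value $Q^*(s^t,e^t)$ equals the supremum over continuations of the realized discounted return from $(s^t,e^t)$ — formally, unrolling the Bellman optimality equation $Q^*(s,e)=\sum_{s'}P(s'|s,e)(R(s,e,s')+\gamma\max_{e'}Q^*(s',e'))$ along the deterministic successor chain, and using that $\gamma<1$ bounds the tail. By the previous paragraph every such realized return is $\le 0$, hence $Q^*(s^t,e^t)\le 0$, contradicting the hypothesis $Q^*(s^t,e^t)>0$. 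Therefore some path from $s^t$ via $e^t$ must eventually reach a state $s^{t+m_t}$ with $L(s^{t+m_t})=0$.

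I would present the argument cleanly as: (i) observe $R$ is nonzero only on transitions that flip the label; (ii) if no "escaping" path exists, every continuation after $s^{t+1}$ is all-$L{=}1$, so all its rewards vanish and the tail return is $\le 0$; (iii) invoke determinism so that $Q^*(s^t,e^t)$ is exactly the best achievable tail return, giving $Q^*(s^t,e^t)\le 0$, the needed contradiction. The main obstacle — really the only subtle point — is step (iii): making precise that for a deterministic MDP the optimal $Q$-value is attained (or approached) by an actual path, so that a uniform upper bound on all path-returns is a valid upper bound on $Q^*$. This is where one must be a little careful, since in principle $Q^*$ is defined via the Bellman equation rather than via trajectories; I would handle it either by a short lemma unrolling the Bellman optimality recursion finitely many steps and bounding the $\gamma^N$-scaled remainder (legitimate because rewards are bounded and $\gamma<1$), or by simply noting that in a deterministic MDP an optimal stationary policy $\pi$ exists and the path it induces from $(s^t,e^t)$ realizes $Q^*(s^t,e^t)$ exactly.
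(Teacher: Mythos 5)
Your proposal is correct and takes essentially the same route as the paper: both arguments rest on determinism of the transitions, the observation that the reward is nonzero only on label-flipping transitions (so a tail that never leaves $L=1$ accumulates non-positive discounted reward), and the identification of $Q^*(s^t,e^t)$ with the realized return of an optimal trajectory, yielding $Q^*(s^t,e^t)\le 0$ and the contradiction. The only difference is presentational — the paper directly constructs the path induced by $\pi^*$ and derives the contradiction on that single path, whereas you argue the contrapositive over all continuations and are somewhat more explicit about why a bound on path-returns transfers to a bound on $Q^*$.
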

\begin{proof}
Using the optimal policy $\pi^*$, we construct a path by defining $e^{t'} = \pi^*(s^{t'})$ for every $t' > t$, and choosing $s^{t'+1}$ to be the only state such that $P(s^{t'},e^{t'},s^{t'+1})=1$. There is only one such state since the b-program transitions (as defined in \autoref{def:run}) are deterministic. Assume, towards contradiction, that $L(s^{t'}) = 1$ for every $t' \geq t$. Then  
$Q^*(s^{t},e^{t})=\sum_{t'=t}^{\infty}\gamma^{t'} R(s^{t'},e^{t'},s^{t'+1}) =0$, 
which contradicts the assumption. This gives us that the path that we have constructed is as required.
\end{proof}

We are now ready to state and prove the two propositions that establish the correctness of our approach, starting with showing that an execution mechanism that generates all $Q^*$ compatible runs is complete in the sense that it generates all possible live runs:

\begin{proposition}
\label{prop:a}
A live b-program run is $Q^*$-compatible.
\end{proposition}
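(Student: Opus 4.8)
The plan is to fix an arbitrary live run $l = s^0 \xrightarrow{e^0} s^1 \xrightarrow{e^1} \cdots$ of the b-program and to verify, for every time $t$, the defining inequality \eqref{eqn:Q-compatible-condition} of $Q^*$-compatibility, namely $\sum_{t'=0}^{t-1} R(s^{t'},e^{t'},s^{t'+1}) + Q^*(s^t,e^t) > -1$. The whole argument splits into two cases according to whether $L(s^t) = 0$ or $L(s^t) = 1$, and each case is closed by combining one of the bookkeeping claims about accumulated rewards with the optimality of $Q^*$.

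First I would dispose of the finite prefix term. By \autoref{claim:a} (after shifting the summation index by one), the partial sum $\sum_{t'=0}^{t-1} R(s^{t'},e^{t'},s^{t'+1})$ equals $0$ when $L(s^t) = 0$ and equals $-1$ when $L(s^t) = 1$. Hence the inequality to be established reduces to showing $Q^*(s^t,e^t) > -1$ in the first case and $Q^*(s^t,e^t) > 0$ in the second.

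Next I would produce a lower bound on $Q^*(s^t,e^t)$ using the live run itself. Because the b-program's transition relation is deterministic (\autoref{def:run}), once we take $e^t$ in $s^t$ the process is forced to $s^{t+1}$, and from there the events $e^{t+1}, e^{t+2}, \dots$ dictated by $l$ form a legal continuation; since $Q^*$ is the supremum of the discounted return over all continuations, this gives $Q^*(s^t,e^t) \ge \sum_{k \ge t} \gamma^{k-t} R(s^k,e^k,s^{k+1})$. Applying \autoref{claim:c} to $l$ at time $t$ then bounds this tail sum from below: it is $> -1$ when $L(s^t) = 0$ and $> 0$ when $L(s^t) = 1$. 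Matching these two cases against the prefix values from \autoref{claim:a} yields $\sum_{t'=0}^{t-1} R(s^{t'},e^{t'},s^{t'+1}) + Q^*(s^t,e^t) > -1$ in both cases, which is exactly \eqref{eqn:Q-compatible-condition}. Since $t$ was arbitrary, $l$ is $Q^*$-compatible, proving the proposition.

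The step I expect to be the crux is the inequality $Q^*(s^t,e^t) \ge \sum_{k \ge t} \gamma^{k-t} R(s^k,e^k,s^{k+1})$: it is where the hypothesis that $l$ is \emph{live} actually enters the argument (through \autoref{claim:c}), and it requires observing that the optimal action-value dominates the return of the specific, possibly non-stationary, trajectory supplied by $l$ --- a fact that rests on the determinism of b-program transitions together with the defining maximality of $Q^*$. Beyond that, the only thing to be careful about is keeping the discount-index conventions of \autoref{claim:c} and of the Bellman equation defining $Q^*$ aligned; everything else is a routine case split using \autoref{claim:a} and \autoref{claim:c}.
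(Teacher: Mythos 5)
Your proposal is correct and follows essentially the same route as the paper's proof: a case split on $L(s^t)$, using \autoref{claim:a} to evaluate the finite prefix and \autoref{claim:c} to bound the discounted tail, then combining them via the fact that $Q^*(s^t,e^t)$ dominates the discounted return of the run's own continuation. You are in fact slightly more explicit than the paper about that last domination step (which the paper dispatches with ``by the definition of $Q^*$'') and about the discount-index convention, but the argument is the same.
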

\begin{proof}
Let $l = s^0 \xrightarrow{e^0} s^1 \xrightarrow{e^1} \cdots $ be a live run. To prove that $l$ is $Q^*$-compatible we now show that the term in \autoref{def:Q-compatible run} holds for every time $t$. 

If $L(s^{t}) = 0$, from \autoref{claim:a} we get that 
$\sum_{k=0}^{t}R( s^{k} ,e^{k}, s^{k+1}) = 0$
and, as shown in \autoref{claim:c}
$\sum_{k=t}^{\infty}\gamma^{k} R(s^{k},e^{k},s^{k+1})  > -1$.

If $L(s^{t}) = 1$, from \autoref{claim:a} we get that
$\sum_{k=0}^{t}R( s^{k} ,e^{k}, s^{k+1}) = -1$
and, as shown in \autoref{claim:c}
$\sum_{k=t}^{\infty}\gamma^{k} R(s^{k},e^{k},s^{k+1})  > 0$.

In both cases, when adding the terms together, we get that for every time $t$
$\sum_{k=0}^{t}R( s^{k} ,e^{k}, s^{k+1}) + \sum_{k=t}^{\infty}\gamma^k R(s^{k},e^{k},s^{k+1}) > -1$.

By the definition of $Q^*$, the optimal-value function,\\
$\sum_{k=0}^{t}R( s^{k} ,e^{k}, s^{k+1}) + Q^*(s^t,e^t) > -1$.

We get that run $l$ is $Q^*$-compatible by \autoref{def:Q-compatible run}.
\end{proof}

Second, we show that an execution that generates $Q^*$ compatible runs according to the distribution defined in \autoref{def:Q-compatible policy} is sound in the sense that it generates live runs with probability one.

\begin{proposition}
\label{prop:b}
A $Q^*$-compatible b-program run is almost surely live.
\end{proposition}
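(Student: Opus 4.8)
The plan is to show that the sampling distribution of \autoref{def:Q-compatible policy} makes the event ``the run is eventually trapped among \emph{must-finish} states'' a null event, by combining a uniform lower bound on the probability of leaving the must-finish region within boundedly many steps with a Borel--Cantelli-type summation. The first ingredient I would establish is that every must-finish state $s$ reachable by a $Q^*$-compatible run has strictly positive optimal value: at such a state the accumulated reward equals $-1$ by \autoref{claim:a}, so the compatibility condition \eqref{eqn:Q-compatible-condition} forces the chosen event $e$ to satisfy $Q^*(s,e)>0$, hence $V^*(s):=\max_{e'}Q^*(s,e')>0$.

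Next I would upgrade \autoref{claim:d} into a uniform escape estimate. Given a reachable must-finish state $s$, \autoref{claim:d} applied with $e=\pi^*(s)$ (admissible since $V^*(s)>0$) yields a finite $\pi^*$-path $s=s_0\to s_1\to\cdots\to s_m$ with $L(s_m)=0$ and $L(s_j)=1$ for $j<m$. Since every transition among $s_0,\dots,s_{m-1}$ joins two must-finish states, its reward is $0$, so the Bellman equation gives $V^*(s_j)=\gamma V^*(s_{j+1})$ for $0\le j\le m-2$; starting from $V^*(s_0)>0$ this forces $V^*(s_j)>0$ for all $0\le j\le m-1$. Consequently, at each $s_j$ with $j<m$ the accumulated reward is still $-1$ and $Q^*(s_j,\pi^*(s_j))=V^*(s_j)>0$, so $\pi^*(s_j)$ is admissible under \eqref{eqn:Q-compatible-condition} and the sampling rule selects it with probability at least $1/|E|$. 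Because the $\pi^*$-path is deterministic, if it failed to reach a non-must-finish state within $|S|$ steps it would enter a must-finish cycle, forcing $Q^*(s,\pi^*(s))=0$ and contradicting $V^*(s)>0$; hence $m\le|S|$. Combining these facts, from any reachable must-finish state the run reaches a non-must-finish state within $|S|$ steps with probability at least $p:=|E|^{-|S|}>0$, regardless of the run's history.

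Finally I would conclude. By \autoref{def:liverun}, a run fails to be live exactly when there is a time $T$ with $L(s^t)=1$ for all $t\ge T$. Fixing $T$ and conditioning successively on the history at times $T, T+|S|, T+2|S|,\dots$, the escape estimate gives that the probability of remaining in must-finish states throughout $[T,\,T+K|S|]$ is at most $(1-p)^K$; letting $K\to\infty$ yields $P(\forall t\ge T:\ L(s^t)=1)=0$, and a union bound over the countably many candidate times $T$ shows that the run is live almost surely. Together with \autoref{prop:a}, this establishes \autoref{theorem:ab}.

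The delicate step is the uniform escape estimate: one must verify that the escape path furnished by \autoref{claim:d} is genuinely ``followable'' by the sampling mechanism --- i.e., each of its events stays admissible under \eqref{eqn:Q-compatible-condition} as the run advances along it, which hinges on the accumulated reward staying at $-1$ (via \autoref{claim:a}) and on $V^*$ remaining positive along must-finish transitions --- and that both the path length and the number of admissible events admit a bound that is uniform over all reachable must-finish states (this is where finiteness of $S$ and $E$ enters). Once this uniform estimate is in hand, the probabilistic wrap-up is routine.
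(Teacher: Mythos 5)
Your proposal is correct and follows essentially the same route as the paper's proof: establish $Q^*(s^t,e^t)>0$ at every must-finish state via \autoref{claim:a} and the compatibility condition, invoke \autoref{claim:d} to obtain an escape path, check that each step of that path remains admissible under Equation~\eqref{eqn:Q-compatible-condition}, and conclude that the sampler cannot avoid all such paths forever. The difference is one of rigor rather than strategy: where the paper asserts that ``the probability that it will not choose any of these paths is zero,'' you supply the missing quantitative content --- the Bellman recursion $V^*(s_j)=\gamma V^*(s_{j+1})$ showing $Q^*$ stays positive along the escape path, the bound $m\le|S|$ on its length, the uniform escape probability $p=|E|^{-|S|}$, and the geometric decay $(1-p)^K$ with a union bound over starting times --- so your write-up is, if anything, a more complete version of the same argument.
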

\begin{proof}
Let $\pi$ be the policy defined in \autoref{def:Q-compatible policy} using the optimal value function $Q^*$. We will show that $\pi$ generates a live run with probability one. Since, by definition, $\pi$ always draws a $Q^*$ compatible runs, we have 
$\sum_{k=0}^{t-1}R( s^{k} ,e^{k}, s^{k+1} ) + Q^*(s^{t},e^{t}) > -1$
for all $t\geq 0$. To generate a non-live run, $\pi$ needs from some point of time, $t_0$, to always visit states that are labeled as \emph{must-finish}, i.e., $L(s^{t}) = 1$ for all $t > t_0$. By~\autoref{claim:a}, for all $t > t_0$, 
$\sum_{k=0}^{t-1}R( s^{k} ,e^{k}, s^{k+1}) = -1$
and we get that 
$Q^*(s^{t},e^{t}) > 0$.

Therefore, by~\autoref{claim:d} for every $t > t_0$ there is a path
$s^t \xrightarrow{\hat{e}^t} \hat{s}^{t+1} \xrightarrow{\hat{e}^{t+1}} \cdots \xrightarrow{\hat{e}^{t+m_{t-1}}} \hat{s}^{t+m_t}$  such that $L(\hat{s}^{t+m_t})=0$. Assume that this is the first such index, i.e., $L(\hat{s}^{t+m_t-1})=1$ and we get that 
$\sum_{k=t}^{t'-1}R( \hat{s}^{k} ,\hat{e}^{k},\hat{s}^{k+1} )=0$
for every $t \leq t' < t+m_t-1$.  This means that all the states along this path satisfy 
$$\sum_{k=0}^{t-1}R( s^{k} ,e^{k}, s^{k+1} )  + \sum_{k=t}^{t'-1}R( \hat{s}^{k} ,\hat{e}^{k},\hat{s}^{k+1} ) + Q^*(\hat{s}^{t'},\hat{e}^{t'}) > -1$$ 
and that $\pi$ could have chosen this path. The probability that it will not choose any of these paths is zero.
\end{proof}

Finally, we get that \autoref{theorem:ab} holds, a live b-program run is $Q^*$-compatible, and a $Q^*$-compatible b-program run is almost surely live:
\begin{proof}[Proof of \autoref{theorem:ab}]
Follows by \autoref{prop:a} and \autoref{prop:b} above.
\end{proof}

\end{document}